\documentclass[12pt,onecolumn,draftclsnofoot]{IEEEtran}

\usepackage{amsmath,amssymb,amsthm}
\usepackage{cite}
\usepackage[hidelinks]{hyperref}
\usepackage{graphicx}

\newtheorem{theorem}{Theorem}
\newtheorem{proposition}{Proposition}
\newtheorem{corollary}{Corollary}

\newcommand{\R}{\mathbb{R}}
\newcommand{\Sph}{\mathbb{S}}
\newcommand{\E}{\mathbb{E}}
\newcommand{\Prob}{\mathbb{P}}
\newcommand{\dd}{\mathrm{d}}
\newcommand{\Pe}{P_{\mathrm e}}

\title{Random Spherical Codes at High SNR: Error Transitions, Fixed-Error Data Rates, and Converse Gaps}

\author{Nikola Zlatanov
\thanks{N.~Zlatanov is with Innopolis University, Innopolis, 420500, Russia (e-mail: n.zlatanov@innopolis.ru).}
}

\begin{document}
\maketitle

\begin{abstract}
This paper characterizes random spherical codebooks over the real
additive white Gaussian noise channel in the high signal-to-noise ratio
(SNR) regime in which the blocklength is fixed, the SNR per real
channel use tends to infinity, and the codebook size grows with SNR.
The ensemble exhibits a sharp error-probability transition governed by
the intrinsic dimension of the sphere and the codebook-growth order.  Below the critical
codebook-growth order, the ensemble-average error probability vanishes;
at the critical order, it converges to a nontrivial limit; and above
that order, it approaches one.  For a fixed target average error
probability, this transition yields the high-SNR expansion of the
ensemble-achievable data rate.  The achievable rate and the
corresponding converse rate bound have the same high-SNR prelog,
establishing first-order optimality within the deterministic
equal-energy, average-error class.  Their ratio tends to one for every
fixed blocklength and target error probability, but their additive
difference approaches a strictly positive limit that depends on both.
We characterize this rate-bound gap jointly in blocklength and error
probability.  At fixed error probability, it vanishes with increasing
blocklength, with a universal leading $1/n$ behavior and reliability
dependence first appearing at the next order.  For error probabilities
that decrease exponentially with blocklength, we identify the threshold
between vanishing and nonvanishing limiting gaps.  We also derive
blocklength laws for meeting a prescribed gap as the target error
probability becomes more stringent.
\end{abstract}

\begin{IEEEkeywords}
AWGN channels, finite blocklength, fixed-error data rate, high-SNR
asymptotics, random coding, spherical codes.
\end{IEEEkeywords}

\section{Introduction}
\label{sec:introduction}

Random spherical codebooks provide a natural meeting point between
coding geometry and finite-blocklength analysis of the additive white
Gaussian noise (AWGN) channel.  Every codeword has exactly the same
energy, while rotational invariance makes the geometry of
maximum-likelihood (ML) decoding analytically accessible.  This paper
studies these codebooks in the fixed-blocklength, high signal-to-noise ratio (SNR) regime: the
blocklength $n$ is fixed, the SNR per real channel use $\gamma$ tends
to infinity, and the codebook size $M=M(\gamma)$ is allowed to grow
with SNR.  With $n$ fixed, a fixed rate corresponds to a fixed codebook
size and hence an error probability that vanishes at high SNR.  A
nondegenerate fixed-error limit can arise only if the codebook itself
grows with SNR.  The central questions are therefore how rapidly
$M(\gamma)$ may grow, what fixed-error data rate this growth supports, and
how closely that rate approaches the upper bound supplied by an
equal-energy converse.

Classical AWGN coding theory establishes capacity and error-exponent
laws and develops geometric sphere-packing and circular-cone converses
\cite{ClassicalInfo1948,GaussianOptimalCodes1959,
ReliableCommunication1968}.  Modern finite-blocklength theory provides
nonasymptotic achievability and converse bounds together with
large-blocklength normal approximations
\cite{FiniteBlocklength2010}.  The circular-cone converse has also been
identified with the optimized minimax converse for the equal-energy
AWGN setting \cite{PolyanskiyMinimax2013}, while direct numerical
methods, Laplace-integral techniques, and refined tail-probability
bounds have been used to evaluate finite-blocklength AWGN converses or
approximate them asymptotically \cite{AWGNConverseEval2015,
FiniteBlocklengthLaplace2016,TailProbabilityBounds2026}.  In parallel,
exact and computationally efficient evaluations of ensemble-average
error probability have been developed for random spherical and related
code ensembles \cite{RandomCodeEnsembleError2021,
EfficientRandomCodingError2023}.  These lines of work do not directly
characterize the asymptotic axis considered here, in which $n$ remains
fixed while both $\gamma$ and $M(\gamma)$ grow.  The present paper
connects the geometric, random-coding, and converse viewpoints on this
axis and carries both the explicit achievable and converse bounds
through their constant high-SNR terms.

Our derivation follows the high-SNR geometry of ML decoding for
spherical codebooks.  Once the transmitted codeword and Gaussian noise
are fixed, the received vector selects a spherical cap containing
precisely the locations of the competitor codewords that tie or beat
the transmitted codeword in the ML metric.  Because the $M-1$
competitor codewords are drawn independently, they enter this
\emph{dangerous cap} independently, conditional on the transmitted
codeword and noise.  Hence, the ensemble-average error probability is
determined exactly by the cap probability and the conditional
probability that none of the competitor codewords enters the cap.  We
express the cap measure in regularized-beta form and then show that,
after multiplication by $\gamma^{(n-1)/2}$, it converges to an explicit
positive random limit.  Substituting this limit into the exact
probability that no competitor codeword enters the cap yields the three
error-probability regimes described below.  We then invert the
nondegenerate transition curve to obtain the fixed-error data rate, expand
the rate bound induced by the circular-cone converse on the same
high-SNR scale, and compare them
to characterize the rate-bound gap and the
associated blocklength--reliability tradeoffs.

This geometry identifies the scaled codebook size
\begin{equation}
  \mu_\gamma
  =
  \frac{M(\gamma)-1}{\gamma^{(n-1)/2}}.
  \label{eq:intro-scaled-codebook-size}
\end{equation}
For a growing codebook $M(\gamma)\to\infty$, the ensemble-average error
probability obeys the sharp three-regime transition
\begin{subequations}
\label{eq:intro-error-transition}
\begin{equation}
  \begin{gathered}
    \mu_\gamma\to0 \quad\Longrightarrow \quad
    \overline\Pe(M(\gamma),n;\gamma)
    \sim \E[\Theta_n] \mu_\gamma\to0.
  \end{gathered}
  \label{eq:intro-underloaded}
\end{equation}
\begin{equation}
  \begin{aligned}
    \mu_\gamma\to\mu\in(0,\infty)
     \quad\Longrightarrow 
    \overline\Pe(M(\gamma),n;\gamma)
    &\to1-\E[e^{-\mu\Theta_n}]\in(0,1).
  \end{aligned}
  \label{eq:intro-critical}
\end{equation}
\begin{equation}
  \mu_\gamma\to\infty
  \quad\Longrightarrow\quad
  \overline\Pe(M(\gamma),n;\gamma)
  \to1,\qquad \qquad\;\;
  \label{eq:intro-overloaded}
\end{equation}
\end{subequations}
where $\Theta_n$ is the
limiting random variable for the scaled dangerous-cap probability and
captures the transverse-noise-induced fluctuation.  The three cases in
\eqref{eq:intro-error-transition} are called the underloaded,
critically loaded, and overloaded regimes.
In the underloaded regime, the number $M-1$ of independent competitor
codewords grows too slowly to offset the $\gamma^{-(n-1)/2}$ decay of
the probability that a given competitor codeword enters the dangerous
cap.  At
critical loading, these two effects balance and
the noise-induced fluctuation $\Theta_n$ remains visible in the limiting
error probability.  In the overloaded regime, the growth in the number
of competitor codewords dominates the decay of the cap-entry
probability of a single competitor codeword.  Thus
$M(\gamma)\asymp\gamma^{(n-1)/2}$ is the only codebook-size order
capable of producing a nondegenerate limiting error probability.

The critical transition of the ensemble-average error probability can
be inverted to obtain the data rate under a prescribed error probability
$\varepsilon$.  Specifically, let
$R_{\varepsilon,\mathrm{ens}}(n,\gamma)$ denote the data rate associated
with the largest codebook size whose ensemble-average error probability
does not exceed $\varepsilon$.  Then
\begin{align}
  R_{\varepsilon,\mathrm{ens}}(n,\gamma)
  &=
  \frac{n-1}{2n}\log_2\gamma
  +\frac1n\log_2
  \mu_{n,\varepsilon}^{\mathrm{ens}}
  \nonumber\\
  &\quad
  +O_{n,\varepsilon}(\gamma^{-1/2}),
  \label{eq:intro-fixed-error-rate}
\end{align}
where, for a target error probability $0<\varepsilon<1$,
$\mu_{n,\varepsilon}^{\mathrm{ens}}$ is the unique positive solution of
\begin{equation}
  1-\E\!\left[
  e^{-\mu_{n,\varepsilon}^{\mathrm{ens}}\Theta_n}
  \right]
  =\varepsilon.
  \label{eq:intro-fixed-error-root}
\end{equation}
This expansion separates the leading geometric growth from the
constant-order reliability dependence.  The prelog $(n-1)/(2n)$ is
determined by the $n-1$ local dimensions available on the sphere,
whereas, for fixed $n$,
$n^{-1}\log_2\mu_{n,\varepsilon}^{\mathrm{ens}}$ contains all
dependence on the prescribed error probability $\varepsilon$ at constant order.
Ensemble averaging guarantees the existence, at each SNR, of at least
one deterministic spherical codebook of that size whose average error
probability does not exceed $\varepsilon$.

To assess first-order optimality, let
$R_{\varepsilon,\mathrm{conv}}(n,\gamma)$ denote the fixed-error rate
upper bound induced by the circular-cone converse for deterministic
equal-energy codes under average error probability.  We show that
\begin{equation}
  1-
  \frac{R_{\varepsilon,\mathrm{ens}}(n,\gamma)}
       {R_{\varepsilon,\mathrm{conv}}(n,\gamma)}
  =
  O_{n,\varepsilon}\!\left(\frac1{\ln\gamma}\right).
  \label{eq:intro-relative-optimality}
\end{equation}
Ensemble averaging guarantees a deterministic spherical code attaining
$R_{\varepsilon,\mathrm{ens}}$, whereas the converse upper bounds every
deterministic equal-energy code under average error probability by
$R_{\varepsilon,\mathrm{conv}}$.  Their common leading term therefore
establishes the optimal high-SNR prelog $(n-1)/(2n)$ within this class,
and the relative shortfall between the two explicit bounds vanishes as
$O(1/\ln\gamma)$.  This is a first-order statement: at fixed $n$, their
additive difference generally approaches a positive constant.

The remaining difference between the explicit achievable data rate and
the converse bound is captured by the high-SNR rate-bound gap
$G(n,\varepsilon)$.  It is defined as the limiting difference
$R_{\varepsilon,\mathrm{conv}}-R_{\varepsilon,\mathrm{ens}}$ as
$\gamma\to\infty$; thus the finite-SNR rate difference approaches a
floor.  For every
fixed $\varepsilon$, as $n\to\infty$, the gap is
$G(n,\varepsilon)=\gamma_{\mathrm E}/(n\ln2)
+O_\varepsilon(n^{-3/2})$, where $\gamma_{\mathrm E}$ is the
Euler--Mascheroni constant.  Dependence on reliability first appears beyond the leading
$1/n$ term.
At fixed $n$ and increasingly stringent reliability, replacing $\varepsilon$ by
$\varepsilon/10$, i.e., a tenfold reduction in the target error
probability, increases the gap asymptotically by
$\log_2(10)/n$ bits per real channel use.  When blocklength and
reliability scale jointly according to
$\varepsilon_n=e^{-\rho n+o(n)}$, the gap converges to zero as
$n\to\infty$ for $0<\rho\le(1-\ln2)/2$; for
$\rho>(1-\ln2)/2$, it converges to a positive constant.  We provide
explicit blocklength laws for meeting a prescribed gap as the permitted
gap decreases or the reliability requirement becomes more stringent.

The remainder of the paper is organized as follows.
Section~\ref{sec:model} introduces the spherical-code model and derives
the exact dangerous-cap representation.  Section~\ref{sec:load-scaled}
establishes the growing-codebook error transition.
Section~\ref{sec:fixed-error-rates} obtains the data rate at a
prescribed error probability.
Section~\ref{sec:deterministic-converse-comparison} expands the known
circular-cone converse on the same high-SNR scale and establishes
first-order optimality.
Section~\ref{sec:gap-design} analyzes the rate-bound gap and the joint
blocklength--reliability tradeoffs.
Section~\ref{sec:numerical-results} presents numerical results, and
Section~\ref{sec:conclusion} concludes the paper.

\section{Model and Exact Ensemble-Average Error Representation}
\label{sec:model}

This section defines the equal-energy AWGN model and reduces the
ensemble-average ML error probability to the normalized surface measure
of one noise-selected spherical cap.  The representation is exact for
every finite $(M,n,\gamma)$ and provides the starting point for the
high-SNR analysis.

\subsection{Channel, spherical ensemble, and ML decoding}

Let
\[
  \Sph^{n-1}=\{\mathbf x\in\R^n:\|\mathbf x\|=1\}
\]
and let
\[
  \mathcal C=\{\mathbf c_1,\ldots,\mathbf c_M\}
  \subset\Sph^{n-1}
\]
be an $(M,n)$ equal-energy codebook.  Its rate is
\[
  \frac1n\log_2M
  \qquad\text{bits per real channel use}.
\]
For $[M]=\{1,\ldots,M\}$ and a uniformly distributed message
$W\in[M]$, the channel output is
\begin{equation}
  \mathbf Y=\sqrt{n\gamma}\,\mathbf C_W+\mathbf Z,
  \qquad
  \mathbf Z\sim\mathcal N(\mathbf0,\mathbf I_n),
  \label{eq:channel}
\end{equation}
where, under the unit-noise normalization, $\gamma$ is the received SNR
per real channel use.  Indeed, every transmitted vector satisfies
\begin{equation}
  \frac1n\left\|\sqrt{n\gamma}\,\mathbf C_W\right\|^2
  =\gamma.
  \label{eq:per-use-snr-normalization}
\end{equation}
Throughout the principal asymptotic
analysis, $n\ge2$ is fixed and $\gamma\to\infty$.  We write
\begin{equation}
  a=\frac{n-1}{2}.
  \label{eq:a-def}
\end{equation}
The exponent $a$ is half the dimension of the tangent space of
$\Sph^{n-1}$ and will determine the high-SNR scaling.

Since all codewords have equal norm, ML decoding is nearest-neighbor
decoding.  The random spherical ensemble is
\begin{equation}
  \mathbf C_1,\ldots,\mathbf C_M
  \stackrel{\mathrm{i.i.d.}}{\sim}
  \operatorname{Unif}(\Sph^{n-1}),
  \label{eq:spherical-ensemble}
\end{equation}
independently of $(W,\mathbf Z)$.  For a deterministic codebook
$\mathcal C$, let
\[
  \Pe(\mathcal C;\gamma)
  =\Prob\{\widehat W\ne W\mid\mathcal C\}
\]
where $\widehat W$ denotes the ML decoder output.  This is the average
error probability over the equiprobable messages.
The ensemble-average error probability is
\begin{equation}
  \overline\Pe(M,n;\gamma)
  =\E_{\mathcal C}[\Pe(\mathcal C;\gamma)].
  \label{eq:avg-pe}
\end{equation}
We use the convention $\overline\Pe(1,n;\gamma)=0$.

\subsection{Dangerous-cap probability}

Conditioning on the transmitted codeword $\mathbf c$ and noise realization
$\mathbf z$, the channel output is $\mathbf y=\sqrt{n\gamma}\,\mathbf c+\mathbf z$.  An
independently drawn competitor codeword $\widetilde{\mathbf C}$ ties or beats
$\mathbf c$ in the ML metric precisely when it belongs to
\begin{equation}
  \mathcal D_\gamma(\mathbf c,\mathbf z)
  =
  \left\{
  \widetilde{\mathbf c}\in\Sph^{n-1}:
  \mathbf y^T\widetilde{\mathbf c}
  \ge \mathbf y^T\mathbf c
  \right\}.
  \label{eq:sph-dangerous-cap}
\end{equation}
This set is the \emph{dangerous cap}.  Let
$\sigma_{n-1}$ denote surface measure on $\Sph^{n-1}$ and define
\begin{align}
  \pi_\gamma(\mathbf c,\mathbf z)
  &=
  \frac{\sigma_{n-1}(\mathcal D_\gamma(\mathbf c,\mathbf z))}
       {\sigma_{n-1}(\Sph^{n-1})}
  \nonumber\\
  &=
  \Prob\{\widetilde{\mathbf C}\in
  \mathcal D_\gamma(\mathbf c,\mathbf z)\}.
  \label{eq:pi-gamma-def}
\end{align}
Thus $\pi_\gamma(\mathbf c,\mathbf z)$ is both the normalized surface
measure of the dangerous cap and the conditional probability that a
uniformly distributed spherical competitor codeword falls in it.  Let
\[
  \Pi_\gamma
  =\pi_\gamma(\mathbf C_W,\mathbf Z)
\]
be the corresponding random dangerous-cap probability.  In
$\Pi_\gamma$, the competitor-codeword location has already been averaged out;
the remaining randomness comes from the transmitted codeword and noise.

Decompose the noise as
\begin{equation}
  Z_\parallel=\mathbf C_W^T\mathbf Z,
  \qquad
  \mathbf Z_\perp=\mathbf Z-Z_\parallel\mathbf C_W,
  \qquad
  T_\perp=\frac{\|\mathbf Z_\perp\|^2}{2}.
  \label{eq:noise-decomposition-main}
\end{equation}
Then $Z_\parallel\sim\mathcal N(0,1)$ and
$T_\perp\sim\operatorname{Gamma}(a,1)$ are independent, where the gamma
distribution uses the shape--rate convention.  The Euclidean height of
the random dangerous cap
$\mathcal D_\gamma(\mathbf C_W,\mathbf Z)$ is
\begin{equation}
  H_\gamma
  =
  1-\frac{\sqrt{n\gamma}+Z_\parallel}
  {\sqrt{(\sqrt{n\gamma}+Z_\parallel)^2+2T_\perp}}.
  \label{eq:Hgamma-def}
\end{equation}
For $0\le x\le1$, let
\begin{equation}
  I_x(a,a)
  =\frac1{\mathrm B(a,a)}
  \int_0^x u^{a-1}(1-u)^{a-1}\,\dd u
  \label{eq:regularized-beta-def}
\end{equation}
where $\mathrm B$ is the beta function; thus $I_x(a,a)$ is the
regularized beta function.

Let $N_\gamma$ be the number of competitor codewords in the random
dangerous cap:
\begin{equation}
  N_\gamma
  =
  \#\left\{i\in[M]\setminus\{W\}:
  \mathbf C_i\in
  \mathcal D_\gamma(\mathbf C_W,\mathbf Z)
  \right\}.
  \label{eq:dangerous-count}
\end{equation}

\begin{proposition}[Exact ensemble-average error formula]
\label{prop:dangerous}
For every $M\ge1$, $n\ge2$, and $\gamma>0$, away from the null event
$\mathbf Y=\mathbf0$,
\begin{equation}
  \Pi_\gamma=I_{H_\gamma/2}(a,a).
  \label{eq:Pi-beta-exact}
\end{equation}
Conditional on $(W,\mathbf C_W,\mathbf Z)$,
\begin{equation}
  N_\gamma
  \sim\operatorname{Binomial}
  \left(M-1,\pi_\gamma(\mathbf C_W,\mathbf Z)\right).
  \label{eq:binomial-dangerous-count}
\end{equation}
Consequently,
\begin{align}
  \overline\Pe(M,n;\gamma)
  &=1-\E[(1-\Pi_\gamma)^{M-1}]
  \label{eq:dangerous-identity}\\
  &=1-\E\left[
  \bigl(1-I_{H_\gamma/2}(a,a)\bigr)^{M-1}
  \right].
  \label{eq:scalar-error-identity}
\end{align}
\end{proposition}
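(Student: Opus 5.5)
The proof has three parts, in order: the cap-measure identity \eqref{eq:Pi-beta-exact}, the conditional binomial law \eqref{eq:binomial-dangerous-count}, and then the error identity. For the cap, fix $(\mathbf c,\mathbf z)$ with $\mathbf y\ne\mathbf0$ and write $\mathbf u=\mathbf y/\|\mathbf y\|$. The dangerous cap \eqref{eq:sph-dangerous-cap} is then $\{\widetilde{\mathbf c}\in\Sph^{n-1}:\mathbf u^T\widetilde{\mathbf c}\ge t\}$ with $t=\mathbf u^T\mathbf c$. Using $\mathbf y=(\sqrt{n\gamma}+z_\parallel)\mathbf c+\mathbf z_\perp$ with $\mathbf z_\perp\perp\mathbf c$, we have $\|\mathbf y\|^2=(\sqrt{n\gamma}+z_\parallel)^2+2t_\perp$, so
\[
t=\frac{\sqrt{n\gamma}+z_\parallel}{\sqrt{(\sqrt{n\gamma}+z_\parallel)^2+2t_\perp}}=1-h,
\]
where $h$ is the height in \eqref{eq:Hgamma-def}. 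By rotational invariance of $\operatorname{Unif}(\Sph^{n-1})$, the cap probability is $\Prob\{V\ge 1-h\}$ with $V=\mathbf u^T\widetilde{\mathbf C}$. $V$ has density proportional to $(1-v^2)^{(n-3)/2}$ on $[-1,1]$, i.e.\ $(1+V)/2\sim\mathrm{Beta}(a,a)$. So $\Prob\{V\ge1-h\}=\Prob\{(1-V)/2\le h/2\}=I_{h/2}(a,a)$, using the symmetry of $\mathrm{Beta}(a,a)$. I would justify the Beta law either from the Gaussian representation $\widetilde{\mathbf C}=\mathbf G/\|\mathbf G\|$, which makes $V^2\sim\mathrm{Beta}(1/2,a)$ and $V$ symmetric, or directly from the standard surface-measure density. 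The case $n=2$ ($a=1/2$) is covered by the same formula. This holds for every $h\in[0,2]$, so it also covers $\sqrt{n\gamma}+z_\parallel<0$, where $h>1$ and the formula still applies. Substituting the random $(\mathbf C_W,\mathbf Z)$ gives \eqref{eq:Pi-beta-exact}.

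For \eqref{eq:binomial-dangerous-count}, note that the competitors $\{\mathbf C_i\}_{i\ne W}$ are i.i.d.\ uniform. They are independent of $(W,\mathbf C_W,\mathbf Z)$: $W$ is independent of the codebook and noise, and the codewords are mutually independent, so conditioning on $W=w$ and $\mathbf C_w$ leaves the other $M-1$ codewords i.i.d.\ uniform. The conditional indicators $\mathbf 1\{\mathbf C_i\in\mathcal D_\gamma\}$ are therefore i.i.d.\ Bernoulli$(\pi_\gamma(\mathbf C_W,\mathbf Z))$, which gives the binomial law.

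The step needing the most care is linking error events to $N_\gamma$, because of ties. Under ML decoding, no error occurs when $N_\gamma=0$: every competitor then satisfies $\mathbf y^T\mathbf C_i<\mathbf y^T\mathbf C_W$, which is nearest-neighbor correctness since all norms are equal. If $N_\gamma\ge1$ with a strict inequality, an error certainly occurs. The remaining event is a competitor exactly on the cap boundary, $\mathbf y^T\mathbf C_i=\mathbf y^T\mathbf C_W$. Conditionally on $(\mathbf C_W,\mathbf Z)$ with $\mathbf y\ne\mathbf0$, this boundary is a lower-dimensional subsphere of surface measure zero, so the tie event has probability zero. Hence $\{\widehat W\ne W\}=\{N_\gamma\ge1\}$ almost surely, whatever tie-breaking rule is used. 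Together with $\Prob\{\mathbf Y=\mathbf0\}=0$, this gives $\overline\Pe=\Prob\{N_\gamma\ge1\}=1-\E[\Prob\{N_\gamma=0\mid W,\mathbf C_W,\mathbf Z\}]=1-\E[(1-\Pi_\gamma)^{M-1}]$. This is \eqref{eq:dangerous-identity}, and \eqref{eq:scalar-error-identity} follows by inserting \eqref{eq:Pi-beta-exact}. For $M=1$ the expression is $0$, matching the stated convention. The definitional fact that averaging $\Pe(\mathcal C;\gamma)$ over the ensemble equals the joint probability of error is just the tower property.
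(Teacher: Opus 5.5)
Your proposal is correct and follows essentially the same route as the paper: reduce the dangerous cap to a threshold event on $\widetilde{\mathbf C}^T\mathbf Y/\|\mathbf Y\|$ with its $\mathrm{Beta}(a,a)$ law, use the conditional independence of the $M-1$ competitors to get the binomial count, and discard the null tie event. You add details the paper omits: you verify explicitly that the threshold equals $1-H_\gamma$, justify the Beta law, and spell out the tie-breaking argument.
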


\begin{proof}
For a fixed unit vector $\mathbf v$ and
$\widetilde{\mathbf C}\sim\operatorname{Unif}(\Sph^{n-1})$,
$(1-\widetilde{\mathbf C}^T\mathbf v)/2$ has the
$\operatorname{Beta}(a,a)$ distribution.  Taking
$\mathbf v=\mathbf Y/\|\mathbf Y\|$ converts
\eqref{eq:sph-dangerous-cap} into the event
$(1-\widetilde{\mathbf C}^T\mathbf v)/2\le H_\gamma/2$, proving
\eqref{eq:Pi-beta-exact}.  Given the transmitted codeword and noise, the
$M-1$ competitor codewords independently enter the same dangerous cap with
probability $\pi_\gamma(\mathbf C_W,\mathbf Z)$.  This proves the
conditional binomial law.  An ML error occurs exactly when
$N_\gamma\ge1$, up to a null tie event, and averaging the conditional
zero-count probability gives the two error identities.
\end{proof}

The representation separates the number of competitor codewords from
the single-codeword geometry.  The codebook supplies $M-1$ independent
competitor codewords, whereas $\Pi_\gamma$ is the noise-selected
conditional probability that any one of them defeats the transmitted
codeword.  An error occurs when at least one competitor codeword
succeeds.  The next section identifies the unique codebook-size scale
on which the growth in the number of competitor codewords balances the
decay of $\Pi_\gamma$.

\section{High-SNR Error Transition}
\label{sec:load-scaled}

The error transition follows in two steps.  We first identify the
high-SNR scale of the probability that one competitor codeword enters the
dangerous cap.  We then combine this limit with the exact zero-count
identity in \eqref{eq:dangerous-identity} to determine how the error
probability changes with codebook size.

Let the codebook size depend on SNR.  Define its scaled size by
\begin{equation}
  \mu(M,\gamma)
  =\frac{M-1}{\gamma^{(n-1)/2}},
  \qquad
  \mu_\gamma=\mu(M(\gamma),\gamma).
  \label{eq:normalized-size-def}
\end{equation}

Define
\begin{equation}
  \begin{aligned}
    D_n
    &=
    \frac1{(2n)^a a\,\mathrm B(a,a)}
    \\
    &=
    \left(\frac{2}{n}\right)^{(n-1)/2}
    \frac{\Gamma(n/2)}
    {(n-1)\sqrt\pi\,\Gamma((n-1)/2)}.
  \end{aligned}
  \label{eq:Dn-def}
\end{equation}
Define the \emph{limiting scaled dangerous-cap probability} on the
same noise probability space as $\Pi_\gamma$ by
\begin{equation}
  \Theta_n=D_nT_\perp^a.
  \label{eq:Theta-law}
\end{equation}
Equivalently,
$\Theta_n\stackrel{\mathrm d}=D_nT^a$ for
$T\sim\operatorname{Gamma}(a,1)$.
Its mean is
\begin{equation}
  A_n=\E[\Theta_n]
  =D_n\frac{\Gamma(2a)}{\Gamma(a)}.
  \label{eq:An-def}
\end{equation}
Thus $\Theta_n$ records the cap-size fluctuation produced by the
transverse noise.

\begin{proposition}[Scaled dangerous-cap limit]
\label{prop:local-mass-limit}
Fix $n\ge2$, and write
$\|X\|_q=(\E|X|^q)^{1/q}$.  As $\gamma\to\infty$, for
$q\in\{1,2\}$,
\begin{equation}
  \left\|\gamma^a\Pi_\gamma-\Theta_n\right\|_q
  =O_{n,q}(\gamma^{-1/2}).
  \label{eq:local-mass-Lq}
\end{equation}
In particular,
\begin{equation}
  \E[\Pi_\gamma]
  =\gamma^{-a}\left[A_n+O_n(\gamma^{-1/2})\right].
  \label{eq:p1-leading}
\end{equation}
\end{proposition}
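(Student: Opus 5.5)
The plan is to work directly from the exact representation $\Pi_\gamma=I_{H_\gamma/2}(a,a)$ of Proposition~\ref{prop:dangerous}. The argument has three parts: expand the regularized beta function near zero, expand the cap height $H_\gamma$ in the noise variables, and control the remainder in $L^q$ using the moments of the Gaussian and gamma variables.

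\emph{Step 1: small-$x$ expansion of the beta function.} For $0\le x\le 1/2$ we have $(1-u)^{a-1}=1+O(u)$ uniformly on $[0,x]$; for $a<1$ the factor is instead bounded between $1$ and $2^{1-a}$, which is still $1+O(u)$. Integrating gives
\[
I_x(a,a)=\frac{x^a}{a\,\mathrm B(a,a)}\bigl(1+O_n(x)\bigr),
\qquad |I_x(a,a)|\le C_n x^a .
\]

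\emph{Step 2: expansion of the cap height.} Write $s=\sqrt{n\gamma}+Z_\parallel$. Then $H_\gamma=1-(1+2T_\perp/s^2)^{-1/2}$. On the good event $E=\{|Z_\parallel|\le\sqrt{n\gamma}/2\}$ we have $s^2\ge n\gamma/4$. Using $1-(1+t)^{-1/2}=t/2+O(t^2)$ together with the global bound $0\le 1-(1+t)^{-1/2}\le t/2$, this gives
\[
H_\gamma/2=\frac{T_\perp}{2s^2}\bigl(1+O(T_\perp/s^2)\bigr).
\]
Next, $s^{-2}=(n\gamma)^{-1}(1+O(|Z_\parallel|\gamma^{-1/2}))$ on $E$. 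Combining these with Step 1,
\[
\gamma^a\Pi_\gamma=\gamma^a\frac{T_\perp^a}{(2n\gamma)^a a\mathrm B(a,a)}\bigl(1+O(|Z_\parallel|\gamma^{-1/2}+(1+Z_\parallel^2)T_\perp\gamma^{-1})\bigr).
\]
The leading term is exactly $\Theta_n=D_nT_\perp^a$, so on $E$,
\[
|\gamma^a\Pi_\gamma-\Theta_n|\le C_n\,T_\perp^a\bigl(|Z_\parallel|\gamma^{-1/2}+(1+Z_\parallel^2)(1+T_\perp)\gamma^{-1}\bigr).
\]
Here I use $(1+y)^a-1=O(y)$ for bounded $y$, which suffices because every relative error term is bounded on $E\cap\{T_\perp\le\gamma\}$. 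The region $T_\perp>\gamma$ is handled as part of the remainder in Step 3.

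\emph{Step 3: the $L^q$ bound and the bad events.} By independence of $Z_\parallel$ and $T_\perp$ and finiteness of all their moments, the right-hand side above has $L^q$ norm $O_{n,q}(\gamma^{-1/2})$. On the complement of $E$ or on $\{T_\perp>\gamma\}$, I bound crudely: $0\le\Pi_\gamma\le1$ gives $\gamma^a\Pi_\gamma\le\gamma^a$, and $\Theta_n$ has all moments. The event $E^c$ has Gaussian-tail probability $e^{-n\gamma/8}$, and $\Prob\{T_\perp>\gamma\}$ decays like $e^{-\gamma/2}$. Hence, by Cauchy--Schwarz, the contribution of these events is
\[
\bigl\|(\gamma^a+\Theta_n)\mathbf 1_{\rm bad}\bigr\|_q\le(\gamma^a+\|\Theta_n\|_{2q})\Prob(\mathrm{bad})^{1/(2q)},
\]
which is exponentially small. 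This proves \eqref{eq:local-mass-Lq}. Finally, \eqref{eq:p1-leading} follows from the $q=1$ case, since $|\E[\gamma^a\Pi_\gamma]-A_n|\le\|\gamma^a\Pi_\gamma-\Theta_n\|_1$ and $A_n=D_n\E[T^a]=D_n\Gamma(2a)/\Gamma(a)$.

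\emph{Main obstacle.} I expect the hardest part to be the careful bookkeeping in Step 2. The relative error must be bounded by a polynomial in $(|Z_\parallel|,T_\perp)$ times $\gamma^{-1/2}$ uniformly on the good event. For small $a$ (e.g.\ $n=2$, $a=1/2$) the map $x\mapsto x^a$ is not Lipschitz at zero, so I must work with relative rather than absolute errors, as done above. The first-order term $|Z_\parallel|\gamma^{-1/2}$ is what makes the rate $\gamma^{-1/2}$ rather than $\gamma^{-1}$.
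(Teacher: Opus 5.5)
Your proof is correct and follows essentially the same route as the paper's: a small-$x$ expansion of $I_x(a,a)$, a relative-error expansion of the cap height on a good event yielding the bound $C_nT_\perp^a\bigl(\gamma^{-1/2}|Z_\parallel|+\gamma^{-1}T_\perp\bigr)$ whose $L^q$ norm is $O(\gamma^{-1/2})$, and exponentially small bad-event contributions using $0\le\Pi_\gamma\le1$. The only difference is that you use the larger good event $\{|Z_\parallel|\le\sqrt{n\gamma}/2,\ T_\perp\le\gamma\}$, on which the relative errors are bounded rather than small, whereas the paper takes $\{|Z_\parallel|\le\gamma^{1/4},\ T_\perp\le\gamma^{1/2}\}$ and appeals to smoothness near $(1,0)$; both choices work.
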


\begin{proof}
See Appendix~\ref{app:local-mass-limits}.
\end{proof}

Proposition~\ref{prop:local-mass-limit} establishes
$\gamma^a\Pi_\gamma\to\Theta_n=D_nT_\perp^a$ in both $L^1$ and
$L^2$, which is the central probabilistic step behind the error and
rate results.  At high SNR, the dangerous-cap half-angle scales as
$\|\mathbf Z_\perp\|/\sqrt{n\gamma}$, so its normalized surface measure
scales as $\gamma^{-a}$; $\Theta_n$ retains the transverse-noise
fluctuation, while longitudinal noise enters only at lower order.

Define the limiting ensemble-average error curve
\begin{align}
  F_n^{\mathrm{ens}}(\mu)
  &=1-\E[e^{-\mu\Theta_n}]
  \nonumber\\
  &=1-\frac1{\Gamma(a)}
  \int_0^\infty t^{a-1}
  e^{-t-\mu D_nt^a}\,\dd t,
  \qquad \mu\ge0.
  \label{eq:ensemble-error-curve}
\end{align}

We now provide the high-SNR transition of the ensemble-average error probability.

\begin{theorem}[Quantitative high-SNR error transition]
\label{thm:load-scaled}
Fix $n\ge2$.  For every finite $K>0$, there are constants
$C_{n,K}<\infty$ and $\gamma_{n,K}<\infty$ such that, for
$\gamma\ge\gamma_{n,K}$,
\begin{equation}
  \sup_{\substack{M\in\mathbb N:\\ \mu(M,\gamma)\le K}}
  \left|
  \overline\Pe(M,n;\gamma)
  -F_n^{\mathrm{ens}}(\mu(M,\gamma))
  \right|
  \le\frac{C_{n,K}}{\sqrt\gamma}.
  \label{eq:quantitative-error-curve}
\end{equation}
Consequently, if $\mu_\gamma\to\mu\in[0,\infty)$, then
\begin{equation}
  \overline\Pe(M(\gamma),n;\gamma)
  \longrightarrow F_n^{\mathrm{ens}}(\mu).
  \label{eq:load-error-limit}
\end{equation}
\end{theorem}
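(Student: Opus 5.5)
We start from the exact identity \eqref{eq:dangerous-identity} of Proposition~\ref{prop:dangerous}. Writing $m=M-1=\mu\gamma^a$ with $\mu=\mu(M,\gamma)\le K$, we have
\[
\overline\Pe(M,n;\gamma)-F_n^{\mathrm{ens}}(\mu)=\E\bigl[e^{-\mu\Theta_n}-(1-\Pi_\gamma)^{m}\bigr].
\]
We bound this by inserting the intermediate quantity $e^{-m\Pi_\gamma}=e^{-\mu\gamma^a\Pi_\gamma}$. This splits the difference into two errors: a Poissonization error $|\E[(1-\Pi_\gamma)^m-e^{-m\Pi_\gamma}]|$, and a limit-replacement error $|\E[e^{-\mu\gamma^a\Pi_\gamma}-e^{-\mu\Theta_n}]|$. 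We will show that each is at most a constant depending on $(n,K)$ times $\gamma^{-1/2}$, uniformly over $\mu\le K$. The case $M=1$ is trivial, since both sides vanish.

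For the limit-replacement term we use that $x\mapsto e^{-\mu x}$ is $\mu$-Lipschitz on $[0,\infty)$. This gives
\[
|\E[e^{-\mu\gamma^a\Pi_\gamma}-e^{-\mu\Theta_n}]|\le K\,\|\gamma^a\Pi_\gamma-\Theta_n\|_1,
\]
which is $O_n(K\gamma^{-1/2})$ by Proposition~\ref{prop:local-mass-limit} with $q=1$.

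For the Poissonization term we use the elementary inequality
\[
0\le e^{-mp}-(1-p)^m\le m p^2\, e^{-(m-1)p}\le m p^2,\qquad p\in[0,1],
\]
which follows from $e^{-p}-(1-p)\le p^2/2\le p^2$ together with the mean value bound $x^m-y^m\le m(x-y)x^{m-1}$. It gives
\[
\E[e^{-m\Pi_\gamma}-(1-\Pi_\gamma)^m]\le m\,\E[\Pi_\gamma^2]=\mu\gamma^{-a}\,\E[(\gamma^a\Pi_\gamma)^2]\le K\gamma^{-a}\bigl(\|\Theta_n\|_2+O_n(\gamma^{-1/2})\bigr)^2,
\]
where the last step uses the $L^2$ part of Proposition~\ref{prop:local-mass-limit} and Minkowski's inequality. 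Here $\|\Theta_n\|_2<\infty$ because $\Theta_n=D_nT^a$ with $T$ gamma distributed, so all its moments are finite. Since $a=(n-1)/2\ge 1/2$, this term is $O_{n,K}(\gamma^{-1/2})$. Combining the two bounds yields \eqref{eq:quantitative-error-curve} for $\gamma\ge\gamma_{n,K}$, where $\gamma_{n,K}$ is the threshold beyond which the $O(\cdot)$ bounds of Proposition~\ref{prop:local-mass-limit} hold with explicit constants.

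For \eqref{eq:load-error-limit}, suppose $\mu_\gamma\to\mu$. Then $\mu_\gamma\le K:=\mu+1$ eventually, so
\[
|\overline\Pe-F_n^{\mathrm{ens}}(\mu_\gamma)|\le C_{n,K}\gamma^{-1/2}\to0 .
\]
It remains to note that $F_n^{\mathrm{ens}}$ is continuous, indeed $\E[\Theta_n]$-Lipschitz, by the same Lipschitz argument. Hence $F_n^{\mathrm{ens}}(\mu_\gamma)\to F_n^{\mathrm{ens}}(\mu)$.

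The only delicate point is that both errors are uniform in $M$ over the range $\mu\le K$. This is why the Poissonization bound is phrased through the scaled second moment of $\Pi_\gamma$, which Proposition~\ref{prop:local-mass-limit} controls, rather than through any pointwise bound on $\Pi_\gamma$. The boundary case $n=2$, where $a=1/2$, is exactly where the Poissonization error matches the $\gamma^{-1/2}$ rate rather than being of smaller order. I expect no real obstacle beyond this bookkeeping, since the heavy lifting is contained in Proposition~\ref{prop:local-mass-limit}.
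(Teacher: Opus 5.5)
Your proposal is correct and follows the paper's proof essentially step for step. The paper likewise inserts $e^{-m\Pi_\gamma}$ and bounds the Poissonization error by the zero-count case of Le~Cam's inequality, $0\le e^{-mp}-(1-p)^m\le 2mp^2$; you derive the same bound directly, with a better constant. It then controls the replacement error by $\mu\|\gamma^a\Pi_\gamma-\Theta_n\|_1$ and uses $a\ge 1/2$ to absorb the $O(\gamma^{-a})$ term. Your $\E[\Theta_n]$-Lipschitz continuity argument for $F_n^{\mathrm{ens}}$ supplies the step the paper treats as immediate for the limit statement.
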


\begin{proof}
See Appendix~\ref{app:first-order-load-proof}.
\end{proof}

\begin{corollary}[Growing-codebook error regimes]
\label{cor:growing-codebook-regimes}
Fix $n\ge2$.  As $\gamma\to\infty$, let
$M=M(\gamma)\to\infty$.  Then
\begin{align}
  \mu_\gamma\to0
  &\quad\Longrightarrow\quad
  \nonumber\\[-2pt]
  &\quad\overline\Pe(M(\gamma),n;\gamma)
  \sim A_n\mu_\gamma\to0,
  \label{eq:underload-regime}\\
  \mu_\gamma\to\mu\in(0,\infty)
  &\quad\Longrightarrow\quad
  \nonumber\\[-2pt]
  &\quad\overline\Pe(M(\gamma),n;\gamma)
  \to F_n^{\mathrm{ens}}(\mu)\in(0,1),
  \label{eq:critical-regime}\\
  \mu_\gamma\to\infty
  &\quad\Longrightarrow\quad
  \nonumber\\[-2pt]
  &\quad\overline\Pe(M(\gamma),n;\gamma)
  \to1.
  \label{eq:overload-regime}
\end{align}
These three cases are called the underloaded, critically loaded, and
overloaded regimes, respectively.  Hence
$M(\gamma)\asymp\gamma^{(n-1)/2}$ is the only codebook-size scale on
which a nondegenerate limiting error probability can occur; when
$\mu_\gamma\to\mu\in(0,\infty)$, the limit is
$F_n^{\mathrm{ens}}(\mu)\in(0,1)$.
\end{corollary}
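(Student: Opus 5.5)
We handle the three regimes separately, using Theorem~\ref{thm:load-scaled} for the bounded-load cases and the exact identity \eqref{eq:dangerous-identity} together with monotonicity for the overloaded case.

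\emph{Critical regime.} If $\mu_\gamma\to\mu\in(0,\infty)$, then \eqref{eq:load-error-limit} gives $\overline\Pe\to F_n^{\mathrm{ens}}(\mu)$. To see that this limit lies in $(0,1)$, note that $\Theta_n=D_nT_\perp^a$ is almost surely positive and finite. Hence $e^{-\mu\Theta_n}\in(0,1)$ almost surely, so $\E[e^{-\mu\Theta_n}]\in(0,1)$.

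\emph{Underloaded regime.} Suppose $\mu_\gamma\to0$. Applying \eqref{eq:quantitative-error-curve} with $K=1$ bounds the error by $F_n^{\mathrm{ens}}(\mu_\gamma)+O(\gamma^{-1/2})$. This is not sufficient on its own, because $\mu_\gamma$ may decay faster than $\gamma^{-1/2}$. We therefore work directly from the exact formula, using the elementary bounds
\[
(M-1)x-\binom{M-1}{2}x^2\le 1-(1-x)^{M-1}\le (M-1)x,\qquad x\in[0,1].
\]
Setting $x=\Pi_\gamma$ and taking expectations, the upper bound gives
\[
\overline\Pe\le (M-1)\E[\Pi_\gamma]=\mu_\gamma\bigl(A_n+O(\gamma^{-1/2})\bigr)
\]
by \eqref{eq:p1-leading}. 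For the lower bound, the correction term is at most $\tfrac12(M-1)^2\E[\Pi_\gamma^2]$. Proposition~\ref{prop:local-mass-limit} with $q=2$ gives $\|\gamma^a\Pi_\gamma\|_2\le\|\Theta_n\|_2+O(\gamma^{-1/2})$, and $\|\Theta_n\|_2<\infty$ because $T_\perp$ is gamma distributed and so has all moments. It follows that $(M-1)^2\E[\Pi_\gamma^2]=\mu_\gamma^2\,O_n(1)$. Combining the two bounds,
\[
\overline\Pe=A_n\mu_\gamma\bigl(1+O(\gamma^{-1/2})+O(\mu_\gamma)\bigr)\sim A_n\mu_\gamma .
\]
Here $A_n>0$, and the assumption $M\to\infty$ guarantees $\mu_\gamma>0$ eventually, so the ratio statement makes sense.

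\emph{Overloaded regime.} This is the main obstacle, since the quantitative theorem only covers bounded load. We use monotonicity: by \eqref{eq:dangerous-identity}, $\overline\Pe(M,n;\gamma)$ is nondecreasing in $M$. Fix any $K>0$. For $\gamma$ large we have $\mu_\gamma>K$, so we can pick an integer $M'\le M(\gamma)$ with $\mu(M',\gamma)\in[K-\gamma^{-a},K]$. Then
\[
\overline\Pe(M(\gamma),n;\gamma)\ge\overline\Pe(M',n;\gamma)\ge F_n^{\mathrm{ens}}(\mu(M',\gamma))-C_{n,K}\gamma^{-1/2}.
\]
Since $F_n^{\mathrm{ens}}$ is continuous, letting $\gamma\to\infty$ gives $\liminf_{\gamma\to\infty}\overline\Pe\ge F_n^{\mathrm{ens}}(K)$. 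Because $\Theta_n>0$ almost surely, dominated convergence gives $F_n^{\mathrm{ens}}(K)\to1$ as $K\to\infty$, and hence $\overline\Pe\to1$.

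\emph{Uniqueness of the scale.} Suppose $\overline\Pe(M(\gamma),n;\gamma)$ has a limit in $(0,1)$. Passing to subsequences along which $\mu_\gamma$ converges in $[0,\infty]$, the first and third regimes force limits $0$ and $1$ respectively. Every subsequential limit of $\mu_\gamma$ must therefore lie in $(0,\infty)$, which means $M(\gamma)\asymp\gamma^{(n-1)/2}$.
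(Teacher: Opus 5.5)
Your proof is correct. For the underloaded and critical regimes you follow the paper essentially verbatim: the same first two Bonferroni bounds, combined with \eqref{eq:p1-leading} and the $q=2$ part of Proposition~\ref{prop:local-mass-limit}, and the finite positive case of Theorem~\ref{thm:load-scaled}.

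The overloaded regime is where you take a genuinely different route. The paper bounds the zero-count probability directly. With $V_\gamma=\gamma^a\Pi_\gamma$, it shows $\E[(1-\Pi_\gamma)^{M-1}]\le\Prob\{V_\gamma\le\zeta\}+e^{-\zeta\mu_\gamma}$ for every $\zeta>0$. It then uses $V_\gamma\to\Theta_n$ in probability and $\Theta_n>0$ almost surely, and lets $\zeta\downarrow0$. You instead reduce the overloaded case to bounded load using monotonicity of $\overline\Pe$ in $M$. You compare with $M'=1+\lfloor K\gamma^a\rfloor$ and then send $K\to\infty$ via $F_n^{\mathrm{ens}}(K)\to1$.

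Your route needs no new probabilistic estimate. In fact only the qualitative limit \eqref{eq:load-error-limit} along $M'(\gamma)$ is used, not the uniform rate, at the cost of an extra limit in $K$. The paper's route is self-contained, uses only convergence in probability of $V_\gamma$, and gives an inequality that is explicit in $\mu_\gamma$.

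Your subsequence argument for the uniqueness of the scale $M(\gamma)\asymp\gamma^{(n-1)/2}$ supplies a step the paper states without proof. Your verification that $F_n^{\mathrm{ens}}(\mu)\in(0,1)$ is also correct.
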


\begin{proof}
See Appendix~\ref{app:first-order-load-proof}.
\end{proof}

The curve $F_n^{\mathrm{ens}}$ is continuous and strictly increasing,
with $F_n^{\mathrm{ens}}(0)=0$ and
$F_n^{\mathrm{ens}}(\mu)\to1$ as $\mu\to\infty$.  Therefore every
target error probability in $(0,1)$ selects a unique point on the
critical curve.  Thus the scaled codebook size $\mu$ is the sole
first-order parameter that must be inverted to impose a target error.

All limits in this section hold for each fixed integer $n\ge2$; no
uniformity in blocklength is asserted.  Moreover, $\overline\Pe$ is an
ensemble average over the random codebook, message, and noise.

\section{Fixed-Error Data Rate}
\label{sec:fixed-error-rates}

We now invert the critical error curve to obtain the data rate under a
prescribed average error probability.  For fixed $0<\varepsilon<1$,
define
$\mu_{n,\varepsilon}^{\mathrm{ens}}$ as the unique solution of
\begin{equation}
  F_n^{\mathrm{ens}}
  (\mu_{n,\varepsilon}^{\mathrm{ens}})
  =\varepsilon,
  \label{eq:mu-eps-ens-def}
\end{equation}
where $F_n^{\mathrm{ens}}$ is defined in
\eqref{eq:ensemble-error-curve}.
Thus $\mu_{n,\varepsilon}^{\mathrm{ens}}$ is the normalized codebook
size that produces limiting error probability $\varepsilon$ on the
critical scale.
Let
\begin{align}
  M_{\varepsilon,\mathrm{ens}}(n,\gamma)
  &=\max\{M\in\mathbb N:
  \overline\Pe(M,n;\gamma)\le\varepsilon\},
  \label{eq:M-eps-ens-def}\\
  R_{\varepsilon,\mathrm{ens}}(n,\gamma)
  &=\frac1n\log_2
  M_{\varepsilon,\mathrm{ens}}(n,\gamma).
  \label{eq:R-eps-ens-def}
\end{align}
Since $0<\Pi_\gamma<1$ almost surely, the exact representation in
\eqref{eq:dangerous-identity} shows that
$\overline\Pe(M,n;\gamma)$ is strictly increasing in $M$ and converges
to one as $M\to\infty$.  Hence the threshold
$\overline\Pe(M,n;\gamma)\le\varepsilon$ is finite and well defined.

We refer to $R_{\varepsilon,\mathrm{ens}}(n,\gamma)$ as the
ensemble-achievable data rate because $\overline\Pe\le\varepsilon$
guarantees, for each SNR, the existence of at least one deterministic
spherical codebook of that size whose average error probability does
not exceed $\varepsilon$.  The following theorem gives its high-SNR
expansion.

\begin{theorem}[Fixed-error data rate]
\label{thm:ensemble-fixed-error-rate}
For every fixed $n\ge2$ and $0<\varepsilon<1$, as $\gamma\to\infty$,
\begin{align}
  \frac{M_{\varepsilon,\mathrm{ens}}(n,\gamma)-1}
       {\gamma^{(n-1)/2}}
  =\mu_{n,\varepsilon}^{\mathrm{ens}}
  +O_{n,\varepsilon}(\gamma^{-1/2}).
  \label{eq:M-eps-leading}
\end{align}
Thus, with $a=(n-1)/2$,
\begin{equation}
  M_{\varepsilon,\mathrm{ens}}(n,\gamma)
  =1+\mu_{n,\varepsilon}^{\mathrm{ens}}\gamma^a
  +O_{n,\varepsilon}(\gamma^{a-1/2}).
  \label{eq:M-eps-leading-unscaled}
\end{equation}
The corresponding data rate satisfies
\begin{align}
  R_{\varepsilon,\mathrm{ens}}(n,\gamma)
  &=\frac{n-1}{2n}\log_2\gamma
  +\frac1n\log_2
  \mu_{n,\varepsilon}^{\mathrm{ens}}
  \nonumber\\
  &\quad+O_{n,\varepsilon}(\gamma^{-1/2}).
  \label{eq:R-eps-ens-direct}
\end{align}
\end{theorem}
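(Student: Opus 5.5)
The plan is to invert the quantitative error transition of Theorem~\ref{thm:load-scaled} around the root $\mu^\ast=\mu_{n,\varepsilon}^{\mathrm{ens}}$. This needs three facts: monotonicity of $\overline\Pe$ in $M$, the uniform $O(\gamma^{-1/2})$ approximation by $F_n^{\mathrm{ens}}$, and a positive derivative of $F_n^{\mathrm{ens}}$ at $\mu^\ast$.

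First I verify that $F_n^{\mathrm{ens}}$ is smooth on $(0,\infty)$ with
\[
(F_n^{\mathrm{ens}})'(\mu)=\E[\Theta_n e^{-\mu\Theta_n}]>0 .
\]
Differentiation under the expectation is justified because $\Theta_n e^{-\mu\Theta_n}\le\Theta_n$ and $\E\Theta_n=A_n<\infty$. Hence on a compact neighborhood $[\mu^\ast-\delta,\mu^\ast+\delta]\subset(0,\infty)$ the derivative is bounded below by some $c_{n,\varepsilon}>0$. This yields a local inverse-Lipschitz property:
\[
F_n^{\mathrm{ens}}(\mu^\ast\pm t)\gtrless\varepsilon\pm c\,t \qquad\text{for } 0\le t\le\delta .
\]
Fix $K=\mu^\ast+\delta+1$ and let $C=C_{n,K}$ be the constant from Theorem~\ref{thm:load-scaled}.

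Next come the two-sided bounds on $M_\varepsilon=M_{\varepsilon,\mathrm{ens}}(n,\gamma)$. Set $t_\gamma=2C/(c\sqrt\gamma)$, which is at most $\delta$ for large $\gamma$.

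\emph{Lower bound.} Let $M_-=1+\lfloor(\mu^\ast-t_\gamma)\gamma^a\rfloor$. Then $\mu(M_-,\gamma)\le\mu^\ast-t_\gamma$, and monotonicity of $F_n^{\mathrm{ens}}$ gives
\[
\overline\Pe(M_-)\le F_n^{\mathrm{ens}}(\mu^\ast-t_\gamma)+C\gamma^{-1/2}\le \varepsilon-ct_\gamma+C\gamma^{-1/2}<\varepsilon .
\]
So $M_\varepsilon\ge M_-$.

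\emph{Upper bound.} Let $M_+$ be the smallest integer with $\mu(M_+,\gamma)\ge\mu^\ast+t_\gamma$. Its scaled size lies in $[\mu^\ast+t_\gamma,\mu^\ast+t_\gamma+\gamma^{-a}]\subset[0,K]$. The theorem then gives
\[
\overline\Pe(M_+)\ge \varepsilon+ct_\gamma-C\gamma^{-1/2}>\varepsilon .
\]
Since $\overline\Pe(M,n;\gamma)$ is strictly increasing in $M$ (noted after \eqref{eq:R-eps-ens-def}), every $M\ge M_+$ violates the constraint, so $M_\varepsilon<M_+$.

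Combining the two bounds,
\[
\Bigl|\tfrac{M_\varepsilon-1}{\gamma^a}-\mu^\ast\Bigr|\le t_\gamma+\gamma^{-a}=O(\gamma^{-1/2}),
\]
using $a\ge1/2$. This proves \eqref{eq:M-eps-leading}, and multiplying through by $\gamma^a$ gives \eqref{eq:M-eps-leading-unscaled}.

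For the rate, write
\[
M_\varepsilon=\gamma^a\bigl(\mu^\ast+O(\gamma^{-1/2})+\gamma^{-a}\bigr)=\gamma^a\mu^\ast\bigl(1+O(\gamma^{-1/2})\bigr),
\]
which uses $\mu^\ast>0$ and again $\gamma^{-a}\le\gamma^{-1/2}$. Taking $\frac1n\log_2$ and applying $\log_2(1+x)=O(x)$ yields \eqref{eq:R-eps-ens-direct}.

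The only delicate step is the upper bound. There the theorem's supremum is restricted to $\mu\le K$, so larger $M$ cannot be handled directly by the approximation. The monotonicity of $\overline\Pe$ in $M$ covers them, and the positivity of $(F_n^{\mathrm{ens}})'$ at the root is what converts the $O(\gamma^{-1/2})$ error bound into an $O(\gamma^{-1/2})$ bound on the scaled codebook size.
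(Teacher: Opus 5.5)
Your proposal is correct and follows essentially the same route as the paper: you invert the uniform $O(\gamma^{-1/2})$ approximation of Theorem~\ref{thm:load-scaled} using a positive lower bound on $(F_n^{\mathrm{ens}})'$ near $\mu_{n,\varepsilon}^{\mathrm{ens}}$, use monotonicity of $\overline\Pe$ in $M$ to handle codebook sizes outside $\mu\le K$, absorb the grid spacing $\gamma^{-a}\le\gamma^{-1/2}$, and take logarithms. The only difference is cosmetic: you bracket $M_{\varepsilon,\mathrm{ens}}(n,\gamma)$ between explicit trial sizes $M_\pm$, whereas the paper places the threshold grid point $\widehat\mu_\gamma$ and its successor in a neighborhood of the root and applies the mean-value theorem there.
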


\begin{proof}
See Appendix~\ref{app:first-order-load-proof}.
\end{proof}

The expansion in \eqref{eq:R-eps-ens-direct} separates the leading
geometric growth from the constant-order reliability dependence.  The
prelog $(n-1)/(2n)$ is determined by the $n-1$ local dimensions of the
sphere, whereas
$n^{-1}\log_2\mu_{n,\varepsilon}^{\mathrm{ens}}$ contains all
dependence on the target error probability at constant order and
retains its blocklength dependence through the critical error curve.
The $O(\gamma^{-1/2})$ remainder follows from the quantitative
convergence of the finite-SNR error curve in a neighborhood of its
unique inverse.

More generally, suppose
\begin{equation}
  R(\gamma)
  =\frac{n-1}{2n}\log_2\gamma
  +\frac{\eta}{n}+o(1)
  \label{eq:rate-offset-scaling}
\end{equation}
and $M(\gamma)=2^{nR(\gamma)}$ is integer-valued.  Then
\begin{equation}
  \overline\Pe(M(\gamma),n;\gamma)
  \longrightarrow F_n^{\mathrm{ens}}(2^\eta).
  \label{eq:rate-offset-error-limit}
\end{equation}
Here $\eta$ is the constant-order offset in the block log-size
$\log_2M(\gamma)$, while $\eta/n$ is the corresponding rate offset per
real channel use.  Every finite $\eta$ selects the limiting error
$F_n^{\mathrm{ens}}(2^\eta)$, and the target error probability
$\varepsilon$ requires
$\eta=\log_2\mu_{n,\varepsilon}^{\mathrm{ens}}$.

\section{Equal-Energy Converse and High-SNR Rate Comparison}
\label{sec:deterministic-converse-comparison}

Ensemble averaging guarantees a deterministic spherical codebook whose
error probability does not exceed the ensemble average.  To assess its
high-SNR optimality, we compare this achievable rate with a converse
that applies to every deterministic equal-energy codebook under
average error probability.  The comparison remains within the equal-energy
class specified in Section~\ref{sec:model}.

For $0\le\theta\le\pi$, define the normalized solid angle
\begin{equation}
  \Omega_n(\theta)
  =
  \frac{\displaystyle\int_0^\theta\sin^{n-2}\phi\,\dd\phi}
       {\displaystyle\int_0^\pi\sin^{n-2}\phi\,\dd\phi}.
  \label{eq:normalized-cone-angle}
\end{equation}
Define
\begin{equation}
  \kappa_n
  =\frac{\Gamma(n/2)}
  {(n-1)\sqrt\pi\,\Gamma((n-1)/2)}.
  \label{eq:kappa-n-def}
\end{equation}
This is the leading small-angle coefficient:
$\Omega_n(\theta)\sim\kappa_n\theta^{n-1}$ as $\theta\downarrow0$.
Let $\mathbf e_1$ be the first standard basis vector in $\R^n$.  Let
$\theta_{n,M}$ be the unique angle satisfying
$\Omega_n(\theta_{n,M})=1/M$ and set
\begin{equation}
  B_n(M,\gamma)
  =
  \Prob\left\{
  \angle(\sqrt{n\gamma}\,\mathbf e_1+\mathbf Z,\mathbf e_1)
  >\theta_{n,M}
  \right\}.
  \label{eq:cone-escape-def}
\end{equation}
Thus $\theta_{n,M}$ is the half-angle of a circular cone occupying the
fraction $1/M$ of all directions, and $B_n(M,\gamma)$ is the probability
that noise drives the received direction outside this cone.
The classical circular-cone converse
\cite{GaussianOptimalCodes1959}, equivalently recovered by the optimized
minimax converse in this setting \cite{PolyanskiyMinimax2013}, states
that every deterministic equal-energy codebook satisfies
\begin{equation}
  \Pe(\mathcal C;\gamma)\ge B_n(M,\gamma),
  \qquad |\mathcal C|=M.
  \label{eq:pointwise-cone-bound-main}
\end{equation}
We evaluate this known inequality only in the fixed-$n$ high-SNR regime
needed for the rate comparison.

For $0<\varepsilon<1$, define
\begin{align}
  M_{\varepsilon,\mathrm{conv}}(n,\gamma)
  &=\max\{M\in\mathbb N:B_n(M,\gamma)\le\varepsilon\},
  \label{eq:M-cone-def}\\
  R_{\varepsilon,\mathrm{conv}}(n,\gamma)
  &=\frac1n\log_2M_{\varepsilon,\mathrm{conv}}(n,\gamma).
  \label{eq:R-conv-def}
\end{align}
For $r\ge1$ and $t\ge0$, let
\begin{equation}
  Q_r(t)=\Prob\{\chi_r^2>t\},
\end{equation}
where $\chi_r^2$ denotes a chi-square random variable with $r$ degrees
of freedom.  For $0<\varepsilon<1$, let $s_{r,\varepsilon}$ be the
unique positive solution of
\begin{equation}
  Q_r(s_{r,\varepsilon})=\varepsilon.
  \label{eq:chi-square-quantile-def}
\end{equation}
Finally, define
\begin{equation}
  \mu_{n,\varepsilon}^{\mathrm{conv}}
  =
  \frac{n^{(n-1)/2}}
  {\kappa_n s_{n-1,\varepsilon}^{(n-1)/2}}.
  \label{eq:mu-eps-conv-def}
\end{equation}
The factor $n^{(n-1)/2}$ reflects the signal radius
$\sqrt{n\gamma}$ associated with SNR $\gamma$ per real channel use.

The maximum defining $M_{\varepsilon,\mathrm{conv}}(n,\gamma)$ is well
defined: $B_n(1,\gamma)=0$ and
$B_n(M,\gamma)\uparrow1$ as $M\to\infty$, so the defining set is
nonempty and finite.

\begin{theorem}[High-SNR circular-cone bound and rate comparison]
\label{thm:equal-energy-rate-comparison}
Fix $n\ge2$ and $0<\varepsilon<1$.  If
$\mu(M(\gamma),\gamma)\to\mu\in(0,\infty)$, then
\begin{equation}
  B_n(M(\gamma),\gamma)
  \longrightarrow
  F_n^{\mathrm{conv}}(\mu)
  =
  Q_{n-1}\left(
  n(\kappa_n\mu)^{-2/(n-1)}
  \right).
  \label{eq:cone-error-curve}
\end{equation}
On the same scaled-codebook-size axis,
\begin{equation}
  F_n^{\mathrm{conv}}(\mu)
  <F_n^{\mathrm{ens}}(\mu),
  \qquad \mu>0.
  \label{eq:limiting-error-curve-ordering}
\end{equation}
Consequently,
\begin{equation}
  \mu_{n,\varepsilon}^{\mathrm{ens}}
  <
  \mu_{n,\varepsilon}^{\mathrm{conv}}.
  \label{eq:fixed-error-scale-ordering}
\end{equation}
Moreover, as $\gamma\to\infty$,
\begin{align}
  R_{\varepsilon,\mathrm{conv}}(n,\gamma)
  &=
  \frac{n-1}{2n}\log_2\gamma
  +\frac1n\log_2
  \mu_{n,\varepsilon}^{\mathrm{conv}}
  \nonumber\\
  &\quad+
  O_{n,\varepsilon}
  \left(
  \gamma^{-\min\{1,(n-1)/2\}}
  \right).
  \label{eq:R-conv-expansion}
\end{align}
For every $\gamma>0$,
\begin{equation}
  R_{\varepsilon,\mathrm{ens}}(n,\gamma)
  \le R_{\varepsilon,\mathrm{conv}}(n,\gamma).
  \label{eq:exact-achievability-converse-bounds}
\end{equation}
Consequently,
\begin{equation}
  \lim_{\gamma\to\infty}
  \frac{R_{\varepsilon,\mathrm{ens}}(n,\gamma)}
       {\log_2\gamma}
  =
  \lim_{\gamma\to\infty}
  \frac{R_{\varepsilon,\mathrm{conv}}(n,\gamma)}
       {\log_2\gamma}
  =\frac{n-1}{2n}.
  \label{eq:common-rate-prelog}
\end{equation}
For all sufficiently large $\gamma$,
\begin{equation}
  0\le
  1-
  \frac{R_{\varepsilon,\mathrm{ens}}(n,\gamma)}
       {R_{\varepsilon,\mathrm{conv}}(n,\gamma)}
  =O_{n,\varepsilon}
  \left(\frac1{\ln\gamma}\right).
  \label{eq:quantitative-achievability-converse-ratio}
\end{equation}
\end{theorem}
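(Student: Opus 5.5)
The plan is to handle the claims in the order stated. The limit \eqref{eq:cone-error-curve} and the expansion \eqref{eq:R-conv-expansion} come from the exact geometry of the cone-escape event. The ordering \eqref{eq:limiting-error-curve-ordering} comes from a one-dimensional comparison of two functionals of $T_\perp$. The remaining statements then follow from monotonicity and from Theorem~\ref{thm:ensemble-fixed-error-rate}.

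\emph{Cone limit and converse rate.} Write $\mathbf Z=(Z_1,\mathbf Z_\perp)$ relative to $\mathbf e_1$. Then the escape event is exactly
\[
\|\mathbf Z_\perp\|^2>(\sqrt{n\gamma}+Z_1)^2\tan^2\theta_{n,M}
\quad\text{on } \{\sqrt{n\gamma}+Z_1>0\},
\]
with the complementary event having probability $O(e^{-c\gamma})$. Since $\Omega_n(\theta)=\kappa_n\theta^{n-1}(1+O(\theta^2))$, inverting $\Omega_n(\theta_{n,M})=1/M$ with $M-1=\mu\gamma^a$ gives
\[
\theta_{n,M}^2=(\kappa_n\mu)^{-2/(n-1)}\gamma^{-1}\bigl(1+O(\gamma^{-1})+O(\gamma^{-a})\bigr).
\]
The $O(\gamma^{-a})$ term comes from replacing $M$ by $M-1$. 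Hence the threshold for $\|\mathbf Z_\perp\|^2\sim\chi^2_{n-1}$ is
\[
n(\kappa_n\mu)^{-2/(n-1)}\bigl(1+2Z_1/\sqrt{n\gamma}+O_P(\gamma^{-1})+O(\gamma^{-a})\bigr),
\]
and dominated convergence gives \eqref{eq:cone-error-curve}. For \eqref{eq:R-conv-expansion} I would Taylor-expand $Q_{n-1}$ in the threshold and average over $Z_1$. The linear term in $Z_1$ has zero mean, so $B_n(M,\gamma)=F_n^{\mathrm{conv}}(\mu(M,\gamma))+O(\gamma^{-\min\{1,a\}})$, uniformly for $\mu$ in a compact subset of $(0,\infty)$. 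Because $F_n^{\mathrm{conv}}$ is continuous with strictly positive derivative at $\mu_{n,\varepsilon}^{\mathrm{conv}}$, inverting gives
\[
(M_{\varepsilon,\mathrm{conv}}-1)/\gamma^a=\mu_{n,\varepsilon}^{\mathrm{conv}}+O(\gamma^{-\min\{1,a\}}).
\]
The integer rounding costs only $O(\gamma^{-a})$. Taking $\frac1n\log_2$ then yields \eqref{eq:R-conv-expansion}. One checks directly that $F_n^{\mathrm{conv}}(\mu_{n,\varepsilon}^{\mathrm{conv}})=\varepsilon$ via \eqref{eq:mu-eps-conv-def}.

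\emph{Curve ordering (main obstacle).} Using $D_n=\kappa_n(2/n)^a$ and $\|\mathbf Z_\perp\|^2=2T_\perp$, the converse curve becomes $F_n^{\mathrm{conv}}(\mu)=\Prob\{X>1\}$ with $X=\mu\Theta_n$. Introducing an independent $E\sim\mathrm{Exp}(1)$, the ensemble curve is $F_n^{\mathrm{ens}}(\mu)=\E[1-e^{-X}]=\Prob\{X>E\}$. The problem is therefore to show $\Prob\{X>E\}>\Prob\{X>1\}$ with $\E E=1$. Writing $h(s)=\Prob\{T_\perp>(s/(\mu D_n))^{1/a}\}$, this reads $\int_0^\infty e^{-s}h(s)\,\dd s>h(1)$.

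The non-strict inequality is free. For each finite $\gamma$, every codebook satisfies $\Pe\ge B_n$, so $\overline\Pe\ge B_n$. Passing to the limit with Theorem~\ref{thm:load-scaled} and the cone limit gives $F_n^{\mathrm{ens}}\ge F_n^{\mathrm{conv}}$.

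For strictness I would first try an analytic argument. Since $h$ is a Gamma-tail composed with a power, it is strictly decreasing and real-analytic. I would show that $s\mapsto\int_0^\infty e^{-s}h(s)\,\dd s-h(1)$, viewed as a function of $\mu$ after rescaling, cannot vanish identically. A cleaner route is to write the difference as $\int_0^\infty e^{-s}[h(s)-h(1)]\,\dd s$ and split at $s=1$. I would then use the single-crossing and log-concavity of the Gamma$(a,1)$ density of $T_\perp$ to bound the negative part $\int_1^\infty$ strictly by the positive part $\int_0^1$.

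If this fails, the fallback is a strict version of the finite-SNR comparison. A random spherical code has, with probability bounded below, a pair of codewords at distance $O(\gamma^{-1/2})$. On that event the error exceeds the cone bound by a positive amount of order $\Theta(1)$ on the critical scale. This gives a $\mu$-dependent strictly positive gap that survives the limit. Once \eqref{eq:limiting-error-curve-ordering} holds, \eqref{eq:fixed-error-scale-ordering} follows because both curves are strictly increasing and continuous: $F_n^{\mathrm{conv}}(\mu_{n,\varepsilon}^{\mathrm{ens}})<\varepsilon$ forces $\mu_{n,\varepsilon}^{\mathrm{conv}}>\mu_{n,\varepsilon}^{\mathrm{ens}}$.

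\emph{Exact ordering and prelog.} If $\overline\Pe(M,n;\gamma)\le\varepsilon$, some deterministic codebook of size $M$ has $\Pe\le\varepsilon$. Then \eqref{eq:pointwise-cone-bound-main} gives $B_n(M,\gamma)\le\varepsilon$, so $M_{\varepsilon,\mathrm{ens}}\le M_{\varepsilon,\mathrm{conv}}$, which is \eqref{eq:exact-achievability-converse-bounds}. Both expansions, \eqref{eq:R-eps-ens-direct} and \eqref{eq:R-conv-expansion}, have leading term $\frac{n-1}{2n}\log_2\gamma$, which gives \eqref{eq:common-rate-prelog}. Their difference is $O_{n,\varepsilon}(1)$ while $R_{\varepsilon,\mathrm{conv}}\ge c\ln\gamma$ for large $\gamma$. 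Dividing the difference by $R_{\varepsilon,\mathrm{conv}}$ therefore yields \eqref{eq:quantitative-achievability-converse-ratio}, with nonnegativity coming from \eqref{eq:exact-achievability-converse-bounds}.
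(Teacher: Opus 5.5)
Apart from the strict curve ordering, your proposal follows the paper's route: the cone-escape geometry with small-angle inversion, the vanishing linear Taylor term in $Z_\parallel$, quantile inversion, and the pointwise converse for the sandwich, the prelog and the ratio bound. Your remark that $F_n^{\mathrm{ens}}\ge F_n^{\mathrm{conv}}$ follows from $\overline\Pe\ge B_n$ is also correct.

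The gap is the strict inequality \eqref{eq:limiting-error-curve-ordering}. It is exactly what \eqref{eq:fixed-error-scale-ordering} needs, since the non-strict version only gives $\mu_{n,\varepsilon}^{\mathrm{ens}}\le\mu_{n,\varepsilon}^{\mathrm{conv}}$. None of your three routes closes it:
\begin{itemize}
\item Showing that the difference ``cannot vanish identically'' does not suffice. Even with analyticity, a nonnegative analytic function can have isolated zeros, so you do not get strictness at every $\mu$.
\item Log-concavity of the $\operatorname{Gamma}(a,1)$ density is false for $n=2$ ($a=1/2$). It is also not the operative property. A log-concave $T$ sharply concentrated where $X=\mu D_nT^a$ sits just above $1$ gives $\Prob\{X>1\}\approx1>\E[1-e^{-X}]\approx1-e^{-1}$. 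So monotonicity of $h$ plus a single crossing of $h(s)-h(1)$ at $s=1$ cannot determine the sign.
\item The fallback is not an argument. One close pair of codewords changes the message-averaged error by only $O(1/M)$. And a finite-SNR strict inequality would need a lower bound uniform in $\gamma$ to survive the limit.
\end{itemize}

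The missing idea is the exact shape of the law of $X=\mu\Theta_n=(T/\tau_\mu)^a$, where $\tau_\mu=\frac n2(\kappa_n\mu)^{-1/a}$. The factor $t^{a-1}$ in the Gamma density cancels the Jacobian of $t\mapsto t^a$, so
\[
  f_X(x)=\frac{\tau_\mu^a}{a\Gamma(a)}\,e^{-\tau_\mu x^{1/a}},\qquad x>0,
\]
which is strictly decreasing. Equivalently, your $h(s)=\Prob\{X>s\}$ is strictly convex, and strict Jensen gives
\[
\int_0^\infty e^{-s}h(s)\,\dd s=\E[h(E)]>h(\E E)=h(1).
\]
In the language of your split at $s=1$: integrating by parts turns the positive and negative parts into $\int_0^1(1-e^{-x})f_X(x)\,\dd x$ and $\int_1^\infty e^{-x}f_X(x)\,\dd x$. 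The two unweighted kernels both have mass $e^{-1}$. Strict decrease of $f_X$ then makes the first exceed $f_X(1)e^{-1}$ and the second fall below it, which is exactly the paper's argument.
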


\begin{proof}
The high-SNR analysis and curve comparison are given in
Appendix~\ref{app:deterministic-converse-expansions}.  Ensemble
averaging supplies a deterministic codebook of size
$M_{\varepsilon,\mathrm{ens}}(n,\gamma)$ whose error probability does
not exceed $\varepsilon$.  The pointwise converse
\eqref{eq:pointwise-cone-bound-main} then gives
$B_n(M_{\varepsilon,\mathrm{ens}}(n,\gamma),\gamma)\le\varepsilon$,
so $M_{\varepsilon,\mathrm{ens}}(n,\gamma)\le
M_{\varepsilon,\mathrm{conv}}(n,\gamma)$ and
\eqref{eq:exact-achievability-converse-bounds} follows.  The two rate
expansions give \eqref{eq:common-rate-prelog}.  They also give
\[
  R_{\varepsilon,\mathrm{conv}}(n,\gamma)
  -R_{\varepsilon,\mathrm{ens}}(n,\gamma)=O_{n,\varepsilon}(1),
\]
whereas
$R_{\varepsilon,\mathrm{conv}}(n,\gamma)\sim
[(n-1)/(2n)]\log_2\gamma$.  This proves
\eqref{eq:quantitative-achievability-converse-ratio}.
\end{proof}

The theorem sandwiches the optimal fixed-error rate within the
deterministic equal-energy, average-error class between an achievable
rate and a converse bound having the same prelog $(n-1)/(2n)$.
Random spherical codes are therefore first-order optimal within this
class in the fixed-$n$, high-SNR regime.  Geometrically, at critical loading
the cone half-angle is of order $\gamma^{-1/2}$, and escape is governed
by the limiting transverse chi-square noise energy in
\eqref{eq:cone-error-curve}.  Because the rate prelogs coincide, the
strict ordering of the two limiting error curves affects the fixed-error
rates through their constant offsets: the converse offset is larger
than the ensemble-achievable offset, producing the nonzero limiting gap
analyzed next.

\section{Achievability--Converse Rate Gap and
Blocklength--Reliability Tradeoffs}
\label{sec:gap-design}

The preceding rate bounds share the same high-SNR prelog.  Their
difference converges to an SNR-independent constant that depends only
on $(n,\varepsilon)$.  We analyze this constant next.

\subsection{High-SNR achievability--converse gap}

Define the gap function as
\begin{equation}
  G(n,\varepsilon)
  =
  \lim_{\gamma\to\infty}
  \left[
  R_{\varepsilon,\mathrm{conv}}(n,\gamma)
  -
  R_{\varepsilon,\mathrm{ens}}(n,\gamma)
  \right].
  \label{eq:gap-def}
\end{equation}
The gap $G(n,\varepsilon)$ is measured in bits per real channel use.
Let $t_{a,\varepsilon}$ be the upper $\varepsilon$-quantile of
$T\sim\operatorname{Gamma}(a,1)$:
\begin{equation}
  \Prob\{T>t_{a,\varepsilon}\}=\varepsilon.
  \label{eq:gamma-quantile-def}
\end{equation}
Define
\begin{align}
  \mathcal L_a(x)
  &=
  \E\!\left[e^{-xT^a}\right]
  \nonumber\\
  &=
  \frac1{\Gamma(a)}
  \int_0^\infty
  u^{a-1}e^{-u-xu^a}\,\dd u,
  \qquad x\ge0.
  \label{eq:generalized-gamma-Laplace}
\end{align}
The function $\mathcal L_a$ is continuous and strictly decreasing from
one to zero.  We denote its functional inverse on $(0,1)$ by
$\mathcal L_a^{-1}$.
Since $2T\sim\chi_{n-1}^2$, the quantiles introduced in the converse
section satisfy
$s_{n-1,\varepsilon}=2t_{a,\varepsilon}$.  The ratio
\begin{equation}
  r_{a,\varepsilon}
  =
  \frac{\mu_{n,\varepsilon}^{\mathrm{ens}}}
       {\mu_{n,\varepsilon}^{\mathrm{conv}}}
  \label{eq:scaled-size-ratio}
\end{equation}
is the ratio of the ensemble-average fixed-error threshold constant to
the converse-bound threshold constant.  Using
$D_n=(2/n)^a\kappa_n$, the common geometric factor $D_n^{-1}$ cancels
from this ratio.

\begin{proposition}[High-SNR rate-bound gap]
\label{prop:high-snr-rate-gap}
For every $n\ge2$ and $0<\varepsilon<1$,
$r_{a,\varepsilon}\in(0,1)$ is the unique solution of
\begin{equation}
  \E\left[
  1-\exp\left\{
  -r_{a,\varepsilon}
  (T/t_{a,\varepsilon})^a
  \right\}
  \right]
  =\varepsilon.
  \label{eq:r-gap-root}
\end{equation}
Equivalently,
\begin{equation}
  r_{a,\varepsilon}
  =
  t_{a,\varepsilon}^{a}
  \mathcal L_a^{-1}(1-\varepsilon).
  \label{eq:r-gap-inverse-transform}
\end{equation}
Consequently, the gap has the exact representation
\begin{equation}
  G(n,\varepsilon)
  =
  -\frac1n\log_2\!\left[
  t_{a,\varepsilon}^{a}
  \mathcal L_a^{-1}(1-\varepsilon)
  \right].
  \label{eq:gap-inverse-transform}
\end{equation}
Equivalently,
\begin{equation}
  G(n,\varepsilon)
  =
  -\frac1n\log_2r_{a,\varepsilon}
  =
  \frac1n\log_2
  \frac{\mu_{n,\varepsilon}^{\mathrm{conv}}}
       {\mu_{n,\varepsilon}^{\mathrm{ens}}}
  >0.
  \label{eq:gap-identity}
\end{equation}
As $\gamma\to\infty$,
\begin{equation}
  R_{\varepsilon,\mathrm{conv}}(n,\gamma)
  -
  R_{\varepsilon,\mathrm{ens}}(n,\gamma)
  =
  G(n,\varepsilon)
  +O_{n,\varepsilon}(\gamma^{-1/2}).
  \label{eq:finite-snr-gap-expansion}
\end{equation}
Moreover, the relative rate shortfall satisfies
\begin{equation}
  1-
  \frac{R_{\varepsilon,\mathrm{ens}}(n,\gamma)}
       {R_{\varepsilon,\mathrm{conv}}(n,\gamma)}
  \sim
  \frac{2nG(n,\varepsilon)}
       {(n-1)\log_2\gamma}.
  \label{eq:relative-shortfall-exact}
\end{equation}
\end{proposition}

\begin{proof}
See Appendix~\ref{app:rate-gap-asymptotics}.
\end{proof}

Because $r_{a,\varepsilon}$ is the ratio of the ensemble and converse
scaled-codebook thresholds, its negative base-2 logarithm divided by
$n$ is their limiting rate-bound gap.  Thus $G(n,\varepsilon)$ is the
exact limiting separation between the explicit ensemble-achievable rate and the
circular-cone converse rate bound.  Since the converse bound need not
be attainable, this separation should not be interpreted as a proven
coding loss.  Equation~\eqref{eq:relative-shortfall-exact} further
identifies the leading coefficient of the relative shortfall.

\subsection{Joint reliability--blocklength behavior}

Let $\Phi$ denote the standard normal distribution function, define
\[
  z_\varepsilon=\Phi^{-1}(1-\varepsilon),
\]
and let $\gamma_{\mathrm E}$ be the Euler--Mascheroni constant.

\begin{theorem}[Joint asymptotics of the rate-bound gap]
\label{thm:gap-reliability-blocklength}
The following limits describe three complementary asymptotic regimes.

\emph{1) Increasing reliability at fixed blocklength:}
For fixed $n\ge2$, let $L=\ln(1/\varepsilon)$.  As
$\varepsilon\downarrow0$,
\begin{equation}
  G(n,\varepsilon)
  =
  \frac{
  L-a\ln L+\ln\!\frac{\Gamma(2a)}{\Gamma(a)}
  }{n\ln2}
  +
  O_n\left(\frac{\ln L}{nL}\right).
  \label{eq:gap-small-epsilon}
\end{equation}
In particular,
\begin{equation}
  G(n,\varepsilon/10)-G(n,\varepsilon)
  \longrightarrow\frac{\log_2 10}{n},
  \qquad \varepsilon\downarrow0.
  \label{eq:fixed-n-decade-cost}
\end{equation}

\emph{2) Increasing blocklength at fixed target error probability:}
For fixed $0<\varepsilon<1$, as $n\to\infty$,
\begin{equation}
  G(n,\varepsilon)
  =
  \frac{\gamma_{\mathrm E}}{n\ln2}
  +
  \frac{\pi^2\sqrt2\,z_\varepsilon}
       {12\ln2\,n^{3/2}}
  +O_\varepsilon(n^{-2}).
  \label{eq:gap-large-n}
\end{equation}
Hence
\begin{equation}
  nG(n,\varepsilon)
  \longrightarrow\frac{\gamma_{\mathrm E}}{\ln2}.
  \label{eq:universal-nG-limit}
\end{equation}

\emph{3) Exponentially stringent reliability:}
As $n\to\infty$, suppose $\varepsilon_n\downarrow0$ and
\begin{equation}
  \frac1n\ln\frac1{\varepsilon_n}\longrightarrow\rho>0.
  \label{eq:joint-reliability-exponent}
\end{equation}
Let $q_\rho>1$ be the unique solution of
\begin{equation}
  q_\rho-1-\ln q_\rho=2\rho.
  \label{eq:q-rho-def}
\end{equation}
If $q_\rho\le2$, then $G(n,\varepsilon_n)\to0$.  If $q_\rho>2$, then
\begin{equation}
  G(n,\varepsilon_n)
  \longrightarrow
  \frac{
  q_\rho-2-2\ln(q_\rho/2)
  }{2\ln2}
  >0.
  \label{eq:G-supercritical-reliability}
\end{equation}
The boundary occurs at
\begin{equation}
  \rho_{\mathrm c}=\frac{1-\ln2}{2}.
  \label{eq:critical-reliability-exponent}
\end{equation}
\end{theorem}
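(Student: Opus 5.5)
The plan is to work throughout from the exact representation of Proposition~\ref{prop:high-snr-rate-gap}, recast as a quantile-shift problem. Write $r_{a,\varepsilon}=e^{-c}$, so that $G(n,\varepsilon)=c/(n\ln2)$. The key observation is that $1-\exp(-e^{u})=\Prob\{\ln E\le u\}$ for $E\sim\operatorname{Exp}(1)$ independent of $T$. Substituting this into \eqref{eq:r-gap-root} turns the defining equation into
\[
  \Prob\{a\ln T-\ln E\ge a\ln t_{a,\varepsilon}+c\}=\varepsilon,
  \qquad
  \Prob\{a\ln T\ge a\ln t_{a,\varepsilon}\}=\varepsilon .
\]
So $c$ is exactly the amount by which the upper $\varepsilon$-quantile of $a\ln T$ moves when the independent variable $-\ln E$ is added. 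This variable is Gumbel distributed, with mean $\gamma_{\mathrm E}$ and variance $\pi^2/6$. All three regimes are then quantile-shift asymptotics in different limits.

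\emph{Part 1.} Here I would use \eqref{eq:gap-inverse-transform} directly. Set $x=\mathcal L_a^{-1}(1-\varepsilon)$, which gives $1-\E[e^{-xT^a}]=\varepsilon$.
\begin{itemize}
\item Since $y-y^2/2\le 1-e^{-y}\le y$ and $\E[T^{2a}]<\infty$, we get $x=\varepsilon\,\Gamma(a)/\Gamma(2a)\,(1+O_n(\varepsilon))$, using $\E[T^a]=\Gamma(2a)/\Gamma(a)$.
\item The standard gamma tail asymptotic $\Prob\{T>t\}\sim t^{a-1}e^{-t}/\Gamma(a)$ gives $t_{a,\varepsilon}=L+(a-1)\ln L-\ln\Gamma(a)+O(\ln L/L)$. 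Hence $a\ln t_{a,\varepsilon}=a\ln L+O(\ln L/L)$.
\item Substituting both into $-(a\ln t_{a,\varepsilon}+\ln x)/(n\ln 2)$ yields \eqref{eq:gap-small-epsilon}.
\end{itemize}
The decade cost \eqref{eq:fixed-n-decade-cost} follows because $L$ increases by $\ln 10$ while $a\ln L$ changes by $o(1)$.

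\emph{Part 2.} For fixed $\varepsilon$ and $a\to\infty$, the variable $X=a\ln T$ has standard deviation $s\approx\sqrt a$ and is asymptotically normal. Adding an independent $N=-\ln E$ with mean $m=\gamma_{\mathrm E}$ and variance $v=\pi^2/6$ shifts the upper quantile $q$ by
\[
  m+\tfrac v2\bigl(-f_X'(q)/f_X(q)\bigr)+O(a^{-1/2}),
\]
where $f_X$ is the density of $X$. Since $-f_X'(q)/f_X(q)\approx z_\varepsilon/s$, this gives $c=\gamma_{\mathrm E}+\pi^2z_\varepsilon/(12\sqrt a)+O(1/a)$. With $\sqrt a=\sqrt{n/2}\,(1+O(1/n))$, this reproduces \eqref{eq:gap-large-n}.

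To make this rigorous, I would Taylor-expand $\Prob\{X\ge q+c-N\}$ in $N$ to second order around $q+c$. I would control $f_X$ and $f_X'$ through a one-term Edgeworth expansion of $\ln T$ for gamma variables. This is available explicitly because the log-gamma density is known in closed form. The third-moment term of $N$ enters only at order $1/a$, and the Gumbel tails are light enough for the expansion to hold. I expect getting a clean $O(n^{-2})$ remainder, uniform enough in $c$ to invert the equation, to be the main technical obstacle. I would first bound $c$ a priori in a compact interval using Chebyshev-type arguments, and only then expand.

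\emph{Part 3.} The target now is only the limit of $c/a$, so logarithmic large deviations suffice.
\begin{itemize}
\item Cramér's bound for the gamma tail, $\Prob\{T\ge aq'\}=e^{-a(q'-1-\ln q')+o(a)}$, gives $t_{a,\varepsilon_n}=aq_\rho(1+o(1))$, since $a\sim n/2$.
\item Write $c=a\beta$. By the independence of $T$ and $E$ and $\Prob\{E\le e^{-s}\}\approx e^{-s}$, the probability $\Prob\{a\ln T-\ln E\ge a\ln t+a\beta\}$ equals
\[
  \exp\Bigl(-a\inf_{q'\le q_\rho}\bigl[q'-1-\ln q'+\beta+\ln(q_\rho/q')\bigr]^{+}+o(a)\Bigr).
\]
The upper bound comes from a union bound over a grid of $q'$, and the lower bound from a single product event.
\item The infimum is attained at $q'=2$ if $q_\rho>2$, and at $q'=q_\rho$ otherwise.
\item Equating the exponent to $q_\rho-1-\ln q_\rho$ gives $\beta=q_\rho-2-2\ln(q_\rho/2)$ in the first case and $\beta=0$ in the second. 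Then $G=a\beta/(n\ln2)\to\beta/(2\ln2)$, which is \eqref{eq:G-supercritical-reliability}.
\end{itemize}
Monotonicity of the left side in $\beta$ lets the upper and lower exponent bounds pin down $\liminf$ and $\limsup$ of $\beta$. Finally, $q_\rho=2$ corresponds to $2\rho=1-\ln 2$, which identifies the boundary \eqref{eq:critical-reliability-exponent}.
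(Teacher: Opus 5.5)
\emph{Parts 1 and 2.} Your Part 1 is the paper's argument: first-order inversion of $1-\E[e^{-xT^a}]=\varepsilon$, combined with the gamma-tail inversion of $t_{a,\varepsilon}$.

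Your Part 2 is the paper's computation in dual form. The paper expands the density $w_a$ of $a\ln(T/t_{a,\varepsilon})$ about $0$ and integrates it against the kernel $d_s(y)=1-e^{-e^{s+y}}-\mathbf 1\{y>0\}$. It evaluates the kernel moments through exactly your Gumbel variable. You instead Taylor-expand the tail $\bar F_X(q+c-N)$ in the shift. The two are related by an integration by parts and give the same equation $c-\gamma_{\mathrm E}=\frac{z_\varepsilon}{2\sqrt a}\bigl[(c-\gamma_{\mathrm E})^2+\pi^2/6\bigr]+O(a^{-1})$.

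Your form has a small advantage. Global bounds replace the paper's truncation to $|y|\le C\ln a$: $\|f_X''\|_\infty/f_X(q)=O(1/a)$ follows from the closed-form log-gamma density, and the Gumbel third moment is finite. The a priori confinement of $c$ then follows from monotonicity in $c$ at the trial values $\gamma_{\mathrm E}\pm1$. One correction: the remainder in your displayed shift formula must be $O(a^{-1})$, not $O(a^{-1/2})$. As written it would swallow the $z_\varepsilon$ term you are after.

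\emph{Part 3.} This is a genuinely different route. The paper treats $1<q_\rho<2$ by dominated convergence in $x=(T/t_{a,\varepsilon_n})^a$ with limiting weight $x^{-q_\rho}$. This gives the sharper fact $r_{a,\varepsilon_n}\to\Gamma(2-q_\rho)^{-1/(q_\rho-1)}$, so $G=O(1/n)$ with an explicit constant. For $q_\rho\ge2$ it uses a Laplace principle for $\E[\min\{1,r(T/t)^a\}]$, via the sandwich $(1-e^{-1})\min\{u,1\}\le1-e^{-u}\le\min\{u,1\}$. Your exact Gumbel representation removes the need for that sandwich. It lets a single large-deviation squeeze cover $q_\rho<2$, $q_\rho=2$ and $q_\rho>2$ at once. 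The cost is that below the threshold you only get $\ln r_{a,\varepsilon_n}=o(a)$.

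However, the exponent you display is not correct as written.
\begin{itemize}
\item The restriction $q'\le q_\rho$ is unjustified, and the positive part should act only on $\beta+\ln(q_\rho/q')$.
\item Because $\mathcal I(u)-\ln u$ is decreasing on $(0,2)$, for $q_\rho<2$ and $\beta>0$ the dominant event is $T/a\approx\min\{2,q_\rho e^\beta\}>q_\rho$.
\item For example, when $q_\rho e^\beta\le2$ the true exponent is $\mathcal I(q_\rho e^\beta)$, which is strictly smaller than your $\mathcal I(q_\rho)+\beta$. So a union bound over $q'\le q_\rho$ alone is not an upper bound.
\item The grid must cover all $q'>0$. This gives $K(\beta)=\inf_{u>0}\{\mathcal I(u)+(\beta-\ln(u/q_\rho))_+\}$, which is the paper's $K_q$.
\end{itemize}
With this correction your conclusions stand. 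For $q_\rho\le2$, $K(0)=\mathcal I(q_\rho)$ and $K$ is strictly increasing on $\beta\ge0$. Also $\beta>0$ because $r_{a,\varepsilon}<1$, so $\beta\to0$. For $q_\rho\ge2$ and $\beta\ge0$, your restricted infimum coincides with $K$, so the root $\beta^*=q_\rho-2-2\ln(q_\rho/2)$ is unaffected.
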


\begin{proof}
See Appendix~\ref{app:rate-gap-asymptotics}.
\end{proof}

At fixed $n$, replacing $\varepsilon$ by $\varepsilon/10$ increases the
rate-bound gap asymptotically by $\log_2(10)/n$ bits per real channel
use.  At fixed $\varepsilon$, the gap decays asymptotically as $1/n$,
with reliability first appearing in the $n^{-3/2}$ correction.  Under
the joint scaling $\varepsilon_n=e^{-\rho n+o(n)}$, the gap vanishes for
$0<\rho\le\rho_{\mathrm c}$ and converges to the positive value in
\eqref{eq:G-supercritical-reliability} for $\rho>\rho_{\mathrm c}$.
The value $q_\rho=2$ is the integrability threshold in the transformed
gap equation: its kernel is linear near the origin, whereas the
limiting weight is proportional to $x^{-q_\rho}$.  At and below this
threshold, the ensemble and converse scaled-codebook thresholds are
separated only subexponentially in $n$, so the rate-bound gap vanishes;
above it, their exponential separation leaves a positive limiting gap.
A positive limiting rate-bound gap does not establish that a higher
rate is achievable; it means only that the circular-cone converse
remains additively separated from the ensemble-achievable fixed-error
rate.

\begin{corollary}[Sequential closure of the rate-bound gap]
\label{cor:sequential-gap-closure}
For every fixed $0<\varepsilon<1$,
\begin{equation}
  \lim_{n\to\infty}\lim_{\gamma\to\infty}
  \left[
  R_{\varepsilon,\mathrm{conv}}(n,\gamma)
  -
  R_{\varepsilon,\mathrm{ens}}(n,\gamma)
  \right]
  =0.
  \label{eq:sequential-gap-closure}
\end{equation}
Hence the explicit achievability and converse bounds become additively
tight when the high-SNR limit is taken before the large-blocklength
limit.
\end{corollary}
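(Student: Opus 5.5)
The corollary follows by composing two results stated earlier, so the plan is to take the two limits in the stated order and quote the relevant expansion at each stage.

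For the inner limit, fix $0<\varepsilon<1$ and $n\ge2$. By Proposition~\ref{prop:high-snr-rate-gap}, specifically \eqref{eq:finite-snr-gap-expansion}, the finite-SNR difference $R_{\varepsilon,\mathrm{conv}}(n,\gamma)-R_{\varepsilon,\mathrm{ens}}(n,\gamma)$ equals $G(n,\varepsilon)+O_{n,\varepsilon}(\gamma^{-1/2})$. Since $n$ is held fixed, the remainder vanishes as $\gamma\to\infty$ and the inner limit exists and equals $G(n,\varepsilon)$. This is just the definition \eqref{eq:gap-def}, and the proposition guarantees that the limit exists. The constant in the remainder may depend on $n$, but that does not matter because $n$ does not move while $\gamma\to\infty$. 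This is precisely why the order of limits matters and why no uniformity in $n$ is needed.

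For the outer limit, apply part~2 of Theorem~\ref{thm:gap-reliability-blocklength} with $\varepsilon$ fixed. By \eqref{eq:gap-large-n},
\[
G(n,\varepsilon)=\frac{\gamma_{\mathrm E}}{n\ln2}+\frac{\pi^2\sqrt2\,z_\varepsilon}{12\ln2\,n^{3/2}}+O_\varepsilon(n^{-2}),
\]
and every term tends to zero as $n\to\infty$. Equivalently, \eqref{eq:universal-nG-limit} gives $nG(n,\varepsilon)\to\gamma_{\mathrm E}/\ln2$, which is finite, so $G(n,\varepsilon)=O(1/n)\to0$. Because $G(n,\varepsilon)>0$ by \eqref{eq:gap-identity}, the convergence is from above.

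Combining the two stages yields \eqref{eq:sequential-gap-closure}, and the interpretive sentence about additive tightness follows directly. The argument has no real obstacle beyond the two cited results. The only point that needs care is to state explicitly that the inner limit is taken at fixed $n$, so that the $n$-dependent constant in $O_{n,\varepsilon}(\gamma^{-1/2})$ is harmless. One should also note that the corollary makes no claim about the reversed or joint limit; part~3 of Theorem~\ref{thm:gap-reliability-blocklength} shows that joint scalings of $(n,\varepsilon)$ can behave differently.
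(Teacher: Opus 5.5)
Your proposal is correct and follows the paper's own proof: the inner limit is $G(n,\varepsilon)$ at fixed $n$, and \eqref{eq:gap-large-n} then gives $G(n,\varepsilon)\to0$. Citing \eqref{eq:finite-snr-gap-expansion} to show that the inner limit exists is a slightly more careful justification than the paper's bare appeal to the definition \eqref{eq:gap-def}, but the argument is the same.
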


\begin{proof}
The inner limit equals $G(n,\varepsilon)$ by \eqref{eq:gap-def}, and
\eqref{eq:gap-large-n} gives $G(n,\varepsilon)\to0$.
\end{proof}

This sequential statement does not imply a uniform approximation when
$n$ and $\gamma$ grow simultaneously.

\subsection{Blocklength required for a prescribed gap}

We invert the gap formulas in two complementary design limits: first,
$G_0\downarrow0$ at fixed reliability, and second,
$\varepsilon\downarrow0$ at fixed permitted gap $G_0$.

For a target gap $G_0>0$, measured in bits per real channel use, define
\begin{equation}
  n_G(\varepsilon,G_0)
  =
  \min\{n\in\mathbb N:n\ge2,\ G(n,\varepsilon)\le G_0\}.
  \label{eq:nG-def}
\end{equation}
Equation~\eqref{eq:universal-nG-limit} guarantees that the defining set
is nonempty.  The quantity $n_G$ controls the limiting rate-bound gap
obtained after letting $\gamma\to\infty$.  Because the remainder in
\eqref{eq:finite-snr-gap-expansion} is not uniform in $n$, this
definition does not by itself provide a finite-SNR guarantee when
$n_G$ grows.

\begin{proposition}[Blocklength required for a prescribed gap]
\label{prop:prescribed-gap-blocklength}
\emph{1) Fixed reliability:}
For fixed $0<\varepsilon<1$, put
\[
  N_0(G_0)=\frac{\gamma_{\mathrm E}}{G_0\ln2},
  \qquad
  c_0=\frac{\pi^2\sqrt2}{12}.
\]
As $G_0\downarrow0$,
\begin{equation}
  n_G(\varepsilon,G_0)
  =
  N_0(G_0)
  +
  \frac{c_0z_\varepsilon}{\gamma_{\mathrm E}}
  \sqrt{N_0(G_0)}
  +O_\varepsilon(1).
  \label{eq:nG-small-gap-law}
\end{equation}
Therefore,
\begin{align}
  &n_G(\varepsilon/10,G_0)
  -
  n_G(\varepsilon,G_0)
  \nonumber\\
  &\quad=
  \frac{c_0}{\gamma_{\mathrm E}}
  \bigl(z_{\varepsilon/10}-z_\varepsilon\bigr)
  \sqrt{N_0(G_0)}
  +O_\varepsilon(1).
  \label{eq:nG-decade-law}
\end{align}
\emph{2) Increasing reliability:}
For fixed gap tolerance $G_0>0$, let
$q_{G_0}>2$ solve
\[
  q_{G_0}-2-2\ln(q_{G_0}/2)=2G_0\ln2
\]
and define
\[
  \rho_{G_0}
  =
  \frac{q_{G_0}-1-\ln q_{G_0}}2.
\]
Then
\begin{equation}
  n_G(\varepsilon,G_0)
  \sim
  \frac{\ln(1/\varepsilon)}{\rho_{G_0}},
  \qquad \varepsilon\downarrow0.
  \label{eq:nG-ultrareliable-law}
\end{equation}
\end{proposition}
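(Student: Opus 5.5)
The plan is to obtain both parts by inverting the gap asymptotics of Theorem~\ref{thm:gap-reliability-blocklength}. Since $n_G$ is a minimum over all $n\ge2$, not the root of a monotone equation, each part needs a two-sided sandwich. The upper side shows that every $n$ beyond a candidate value satisfies $G(n,\varepsilon)\le G_0$. The lower side shows that every $n$ below a slightly smaller candidate, down to $n=2$, violates it.

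\emph{Part 1 (fixed $\varepsilon$, $G_0\downarrow0$).} Multiply \eqref{eq:gap-large-n} by $\ln2$:
\[
  \ln2\,G(n,\varepsilon)=\gamma_{\mathrm E}/n+c_0z_\varepsilon n^{-3/2}+O_\varepsilon(n^{-2}).
\]
Put $n=N_0+\delta$ with $N_0=N_0(G_0)$, so that $G_0\ln2=\gamma_{\mathrm E}/N_0$. Expanding gives
\[
  \ln2\,[G(n,\varepsilon)-G_0]
  =-\gamma_{\mathrm E}\delta/N_0^2+c_0z_\varepsilon N_0^{-3/2}+O_\varepsilon(N_0^{-2}(1+\delta^2/N_0)),
\]
uniformly for $|\delta|\le N_0/2$. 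The sign of this quantity is therefore fixed once $|\delta-(c_0z_\varepsilon/\gamma_{\mathrm E})\sqrt{N_0}|\ge K$, for a large constant $K=K_\varepsilon$. This yields $G\le G_0$ for all $n\in[N_0+(c_0z_\varepsilon/\gamma_{\mathrm E})\sqrt{N_0}+K,\,\tfrac32N_0]$. For $n\ge\tfrac32N_0$ the leading term alone gives $G(n,\varepsilon)\le\tfrac23G_0\cdot(1+o(1))<G_0$. In the other direction, $G>G_0$ holds on $[N_0/2,\,N_0+(c_0z_\varepsilon/\gamma_{\mathrm E})\sqrt{N_0}-K]$. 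For $n<N_0/2$ I use two facts: $G(n,\varepsilon)>0$ for every fixed $n$ by \eqref{eq:gap-identity}, and $nG(n,\varepsilon)\to\gamma_{\mathrm E}/\ln2$. Together they give a uniform bound $G(n,\varepsilon)\ge c_\varepsilon/n$ on $n\ge2$, hence $G(n,\varepsilon)>G_0$ there once $G_0$ is small. This proves \eqref{eq:nG-small-gap-law}, and \eqref{eq:nG-decade-law} follows by subtraction, since $N_0$ does not depend on $\varepsilon$.

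\emph{Part 2 (fixed $G_0$, $\varepsilon\downarrow0$).} Let $g(\rho)$ be the limit in part~3 of Theorem~\ref{thm:gap-reliability-blocklength}, so $g=0$ on $(0,\rho_{\mathrm c}]$ and $g(\rho)=(q_\rho-2-2\ln(q_\rho/2))/(2\ln2)$ otherwise. Both $q\mapsto q-1-\ln q$ on $q>1$ and $q\mapsto q-2-2\ln(q/2)$ on $q>2$ are increasing, so $g$ is continuous and strictly increasing on $[\rho_{\mathrm c},\infty)$. Hence $\rho_{G_0}$ is the unique point with $g(\rho_{G_0})=G_0$. Write $L=\ln(1/\varepsilon)$. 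I will argue by subsequences: take $\varepsilon_k\downarrow0$ and set $n_k=n_G(\varepsilon_k,G_0)$. Part~1 of the theorem shows $G(n,\varepsilon)\to\infty$ for each fixed $n$, so $n_k\to\infty$; pass to a further subsequence with $L_k/n_k\to\rho'\in[0,\infty]$.

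If $\rho'\in(0,\infty)$, minimality and part~3 together give $g(\rho')\le G_0$ and (from $n_k-1$, when that is $\ge2$) $g(\rho')\ge G_0$, so $\rho'=\rho_{G_0}$. The degenerate cases $\rho'=0$ and $\rho'=\infty$ must be excluded. This is the step I expect to be the main obstacle: part~3 is stated only for a single sequence with a fixed exponent, and it does not directly control arbitrary $n$ below or above the candidate. My first attempt is to establish that $G(n,\varepsilon)$ is nonincreasing in $\varepsilon$ for fixed $n$, via the representation \eqref{eq:gap-inverse-transform}. With monotonicity in hand, I can compare each $(n,\varepsilon)$ with $(n,e^{-\rho n})$ for a fixed $\rho$ near $\rho_{G_0}$ and apply part~3 to that single sequence, and the comparison argument then gives:
\begin{itemize}
\item for $n\le L/(\rho_{G_0}+\eta)$ and large $n$, $\varepsilon\le e^{-(\rho_{G_0}+\eta)n}$, so $G(n,\varepsilon)\ge G(n,e^{-(\rho_{G_0}+\eta)n})\to g(\rho_{G_0}+\eta)>G_0$;
\item for $n\ge L/(\rho_{G_0}-\eta)$ we get $G\le g(\rho_{G_0}-\eta)+o(1)<G_0$, and the range $n\to\infty$ with $L/n\to0$ is covered by the same comparison;
\item the bounded-$n$ range is handled by part~1 of the theorem.
\end{itemize}
Letting $\eta\downarrow0$ gives \eqref{eq:nG-ultrareliable-law}. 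If monotonicity turns out to be awkward to prove, the fallback is to re-run the part~3 argument with uniformity over $\rho$ in compact subsets of $(0,\infty)$, which the same Laplace-type analysis should provide.
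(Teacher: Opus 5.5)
\textbf{Part~2 has a genuine gap.} Your subsequence skeleton and the case $\rho'\in(0,\infty)$ are fine. The case $\rho'=0$ needs nothing new either. The single trial blocklength $n_+=\lceil L/(\rho_{G_0}-\delta)\rceil$ is feasible by part~3 of Theorem~\ref{thm:gap-reliability-blocklength} applied to one sequence, so $n_G\le n_+$; this is exactly the paper's upper bound.

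The problem is the exclusion of $\rho'=\infty$, i.e., showing that blocklengths with $\ln(1/\varepsilon)/n\to\infty$ are infeasible. Your argument for this rests entirely on monotonicity of $G(n,\cdot)$, which you do not prove. It is also not visible in $r_{a,\varepsilon}=t_{a,\varepsilon}^a\mathcal L_a^{-1}(1-\varepsilon)$: as $\varepsilon\downarrow0$, the first factor increases while the second decreases. Your fallback, uniformity over compact $\rho$-sets, cannot reach $\rho'=\infty$ at all.

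The paper handles this case directly. Set $U_n=\Gamma(a)T^a/\Gamma(2a)$, so that $\E[U_n]=1$ and $\E[U_n^2]=e^{O(a)}$. When $L/a\to\infty$ this gives $\varepsilon\E[U_n^2]\to0$, and the Taylor bounds for $1-e^{-x}$ then pin the root $\lambda$ of $\E[1-e^{-\lambda U_n}]=\varepsilon$ in $[\varepsilon,2\varepsilon]$. A gamma Chernoff bound and Stirling give $\ln[\Gamma(a)t_{a,\varepsilon}^a/\Gamma(2a)]=o(L)$. Hence $-\ln r_{a,\varepsilon}\sim L$ and $G\to\infty$.

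Alternatively, your lemma is true and can be proved as follows. Let $V=a\ln T$ and $\Lambda=-\ln E$ with $E\sim\operatorname{Exp}(1)$ independent. Then $F_n^{\mathrm{ens}}(\mu)=\Prob\{V+\Lambda>-\ln(\mu D_n)\}$ and $F_n^{\mathrm{conv}}(\mu)=\Prob\{V>-\ln(\mu D_n)\}$. So $n\ln2\,G(n,\varepsilon)$ equals the upper $\varepsilon$-quantile of $V+\Lambda$ minus that of $V$. $V$ has the log-concave density proportional to $e^{v-e^{v/a}}$. By the classical characterization of log-concave densities, $V$ is then smaller than $V+\Lambda$ in the dispersive order, which says exactly that this quantile difference is nonincreasing in $\varepsilon$. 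With this supplied, your comparison with $(n,e^{-(\rho_{G_0}\pm\eta)n})$ works and is a clean alternative to the paper's case split.

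A smaller repair is needed in Part~1. The bound $G(n,\varepsilon)\ge c_\varepsilon/n$ with $c_\varepsilon=\inf_n nG(n,\varepsilon)$ yields $G>G_0$ on all of $n<N_0/2$ only if $c_\varepsilon>\gamma_{\mathrm E}/(2\ln2)$. This can fail; for instance $G(2,\varepsilon)\to0$ as $\varepsilon\uparrow1$, since $r_{a,\varepsilon}\to1$ there. Instead, fix $n_1$ with $nG(n,\varepsilon)\ge0.6\,\gamma_{\mathrm E}/\ln2$ for $n>n_1$, which gives $G\ge1.2\,G_0$ on $n_1<n<N_0/2$. Then dispose of $2\le n\le n_1$ by the positive minimum of finitely many gaps, as the paper does.
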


\begin{proof}
See Appendix~\ref{app:blocklength-design-proofs}.
\end{proof}

The leading requirement
$N_0(G_0)=\gamma_{\mathrm E}/(G_0\ln2)$ is independent of the fixed
target error probability and grows inversely with the permitted gap.
Reliability first enters through the order-$\sqrt{N_0(G_0)}$
correction; since
$z_{\varepsilon/10}>z_\varepsilon$, replacing $\varepsilon$ by $\varepsilon/10$
requires an asymptotically positive blocklength increment at this order
as $G_0\downarrow0$.  In the complementary ultra-reliable limit, the
leading-order law $\ln(1/\varepsilon)/\rho_{G_0}$ predicts an increment
of $\ln10/\rho_{G_0}$ channel uses per reliability decade.  This is a
first-order design interpretation; the integer-valued increment itself
requires a sharper remainder estimate.

The prescribed-gap problem controls the limiting separation between
the two rate bounds and completes the analytical
blocklength--reliability tradeoff developed in this section.

\section{Numerical Results}
\label{sec:numerical-results}

This section illustrates the error-probability transition, the
fixed-error data rates, and the achievability--converse rate-bound gap.

\subsection{Numerical evaluation and Monte Carlo validation}

The finite-SNR ensemble-average error probability is evaluated from the
exact dangerous-cap representation in
\eqref{eq:dangerous-identity}, and the ensemble-achievable rate and
converse rate bound are obtained by monotone inversion of their
respective error probabilities.  Smooth curves use continuous
codebook-size extensions
only where integer rounding is graphically negligible, whereas every
simulation marker corresponds to an integer codebook size.  Conditional
Monte Carlo averages the exact error probability conditioned on the
noise and checks the numerical integration and inversion.  Direct ML
simulation explicitly generates the spherical competitor codewords and
provides an independent end-to-end check of the dangerous-cap
reduction.  All
displayed SNR values are $10\log_{10}\gamma$ in decibels, where
$\gamma$ is the SNR per real channel use.  Circles denote conditional
Monte Carlo estimates; the open squares in
Fig.~\ref{fig:numerical-error-transition} denote direct ML Monte Carlo
estimates for the spherical ensemble.

\subsection{Error-transition scaling}

Fig.~\ref{fig:numerical-error-transition} shows the
ensemble-average error probability as a function of the scaled
codebook size
\[
  \mu=\frac{M-1}{\gamma^{(n-1)/2}}
\]
for $n=6$.  At 5 dB, a visible finite-SNR displacement from
$F_n^{\mathrm{ens}}(\mu)$ remains.  The 20- and 40-dB curves are
progressively closer to the limit, and the 40-dB curve is visually
indistinguishable from it over the displayed transition.  The
agreement occurs on the scaled-size axis rather than on the
unnormalized codebook-size axis, confirming that
$M\asymp\gamma^{(n-1)/2}$ is the critical codebook-growth order.

The figure also displays both endpoints of the transition.  For small
$\mu$, the error probability approaches zero, consistently with the
underloaded regime in \eqref{eq:underload-regime}; for large $\mu$, it
approaches one, consistently with the overloaded regime in
\eqref{eq:overload-regime}.  Between these endpoints, the nontrivial
transition agrees with \eqref{eq:critical-regime}.  The conditional
Monte Carlo and direct ML markers follow the corresponding finite-SNR
curves throughout the accessible transition region.

\begin{figure}[t]
  \centering
  \includegraphics[width=\columnwidth]
  {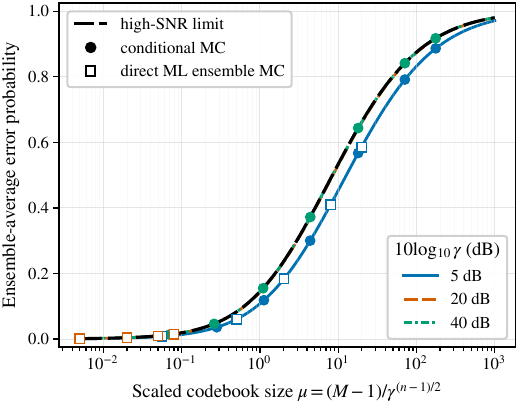}
  \caption{Ensemble-average error probability versus the scaled
  codebook size $\mu=(M-1)/\gamma^{(n-1)/2}$ for $n=6$.  Colored
  curves show the finite-SNR numerical evaluations at 5, 20, and
  40 dB SNR per real channel use, and the black dashed curve is
  $F_n^{\mathrm{ens}}(\mu)$.  Circles are conditional Monte Carlo
  estimates, while open squares are direct ML Monte Carlo estimates
  using integer codebook sizes.  The markers are plotted at
  their realized values of $\mu$.}
  \label{fig:numerical-error-transition}
\end{figure}

\subsection{Fixed-error rates}

Fig.~\ref{fig:numerical-fixed-error-rates} compares
$R_{\varepsilon,\mathrm{ens}}(n,\gamma)$ with the converse rate bound
$R_{\varepsilon,\mathrm{conv}}(n,\gamma)$.  In panel~(a), the
blocklength is fixed at $n=16$ while the target error probability is
varied.  Increasing the reliability requirement lowers both the
ensemble-achievable rate and the converse rate bound
through the constant high-SNR offsets
$n^{-1}\log_2\mu_{n,\varepsilon}^{\mathrm{ens}}$ and
$n^{-1}\log_2\mu_{n,\varepsilon}^{\mathrm{conv}}$, while their leading
slopes remain unchanged.  This behavior is consistent with
\eqref{eq:R-eps-ens-direct} and \eqref{eq:R-conv-expansion}: the target
error probability enters the constant-order term but not the common
prelog $(n-1)/(2n)$.

Panel~(b) fixes $\varepsilon=10^{-2}$ and varies the blocklength.  The
ensemble-achievable rates and converse rate bounds are ordered, from
highest to lowest, by $n=32$, $n=8$, and $n=4$ throughout the displayed
SNR range, while the high-SNR slope approaches $1/2$
as $n$ increases.  The solid--dashed separation also narrows with $n$,
anticipating the large-blocklength gap behavior in
\eqref{eq:gap-large-n}.  In both panels the ensemble-achievable rate
remains below the converse rate bound, as required by
\eqref{eq:exact-achievability-converse-bounds}, and the conditional
Monte Carlo rate estimates agree with the ensemble curves.

The small staircase at the lowest displayed SNRs, most visible for
$n=4$, arises from the operational integer-codebook-size constraint.

\begin{figure*}[t]
  \centering
  \includegraphics[width=\textwidth]
  {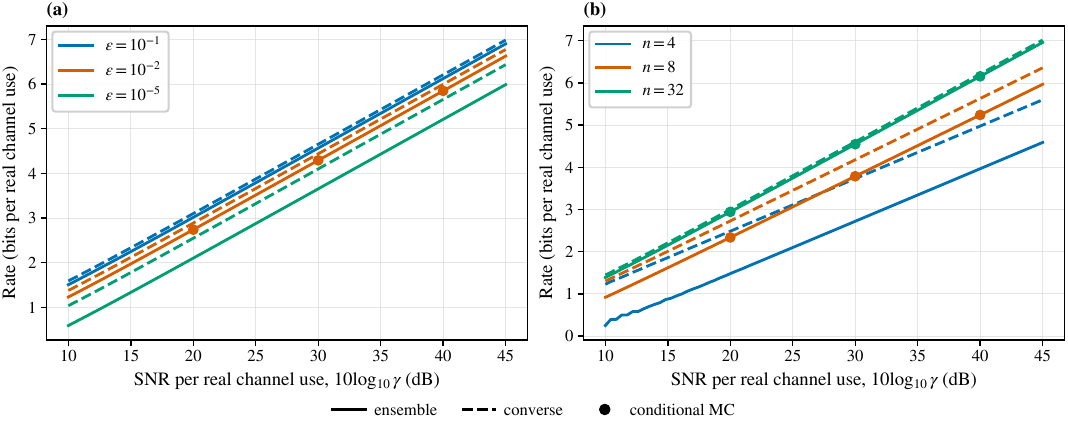}
  \caption{Finite-SNR ensemble-achievable rate and converse rate bound
  versus SNR per real channel use.  Solid curves show
  $R_{\varepsilon,\mathrm{ens}}$, dashed curves are
  the converse rate bound $R_{\varepsilon,\mathrm{conv}}$, and circles
  are conditional Monte
  Carlo ensemble-rate estimates.  (a) Fixed $n=16$ and target error
  probabilities $\varepsilon\in\{10^{-1},10^{-2},10^{-5}\}$.
  (b) Fixed $\varepsilon=10^{-2}$ and blocklengths
  $n\in\{4,8,32\}$.}
  \label{fig:numerical-fixed-error-rates}
\end{figure*}

\subsection{Finite-SNR and limiting rate-bound gaps}

To distinguish the finite-SNR rate difference from its limiting value,
define
\begin{equation}
  \Delta_R(n,\varepsilon,\gamma)
  =
  R_{\varepsilon,\mathrm{conv}}(n,\gamma)
  -
  R_{\varepsilon,\mathrm{ens}}(n,\gamma).
  \label{eq:numerical-finite-snr-gap}
\end{equation}
Proposition~\ref{prop:high-snr-rate-gap} gives
\[
  \Delta_R(n,\varepsilon,\gamma)
  =
  G(n,\varepsilon)+O_{n,\varepsilon}(\gamma^{-1/2}).
\]
Fig.~\ref{fig:numerical-rate-difference} compares
$\Delta_R(n,\varepsilon,\gamma)$ with the corresponding horizontal
level $G(n,\varepsilon)$.

Panel~(a) fixes $n=16$.  More stringent reliability produces a larger
limiting rate-bound gap, in agreement with the fixed-$n$ reliability
dependence in Theorem~\ref{thm:gap-reliability-blocklength}.  Panel~(b)
fixes $\varepsilon=10^{-2}$.  The gap decreases markedly as the
blocklength increases, consistently with the large-$n$ expansion in
\eqref{eq:gap-large-n}.  In both panels, the finite-SNR differences
approach their respective dotted limiting levels, and the conditional
Monte Carlo points agree with the numerical rate differences.

The visible jumps and oscillations at the lowest displayed SNR values,
most prominently for $\varepsilon=10^{-4}$ in panel~(a) and $n=4$ in
panel~(b), result
from the integer thresholds defining the two finite-SNR rates.  The
difference of two quantized rate thresholds can change nonmonotonically
when either admissible codebook size increases by one.  Their
magnitude rapidly becomes negligible as the codebook sizes grow.
\begin{figure*}[t]
  \centering
  \includegraphics[width=\textwidth]
  {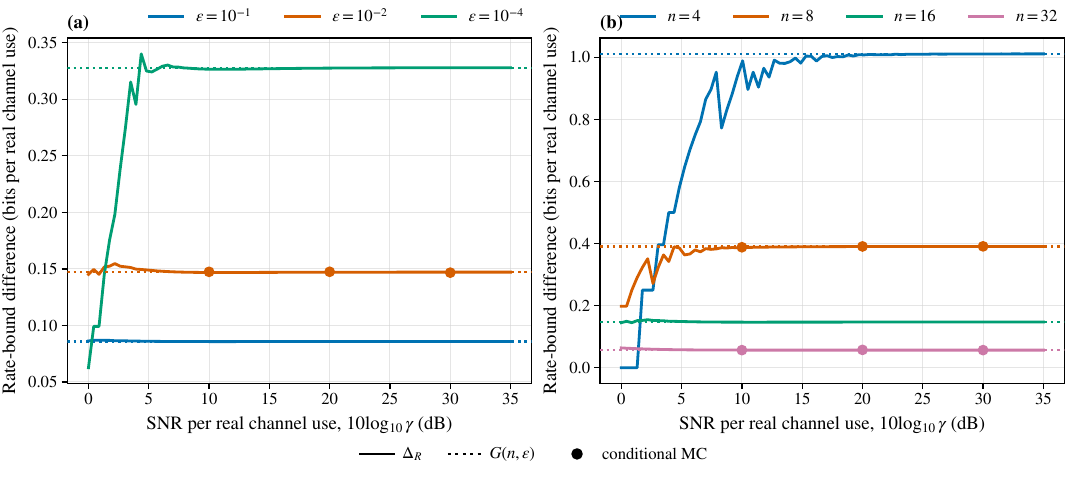}
  \caption{Finite-SNR rate-bound difference
  $\Delta_R(n,\varepsilon,\gamma)$ versus SNR per real channel use,
  together with its limiting value $G(n,\varepsilon)$.  Solid curves
  show $\Delta_R$, color-matched
  dotted lines show $G(n,\varepsilon)$, and circles are conditional
  Monte Carlo estimates of the finite-SNR difference.  (a) Fixed
  $n=16$ and
  $\varepsilon\in\{10^{-1},10^{-2},10^{-4}\}$.  (b) Fixed
  $\varepsilon=10^{-2}$ and $n\in\{4,8,16,32\}$.  The low-SNR
  oscillations are caused by integer-codebook-size quantization.}
  \label{fig:numerical-rate-difference}
\end{figure*}

Fig.~\ref{fig:numerical-limiting-gap} examines the dependence of the
limiting gap on blocklength and reliability without a finite-SNR
remainder.  Panel~(a) shows $G(n,\varepsilon)$ over
$2\le n\le 10^3$.  The log-log axes expose the leading $1/n$ decay,
and the curves approach the asymptotic law
$\gamma_{\mathrm E}/(n\ln2)$.  At a fixed finite blocklength, a smaller
target error probability produces a larger gap.

Panel~(b) shows $nG(n,\varepsilon)$ over
$50\le n\le10^4$.  For every displayed fixed
$\varepsilon$, the curves approach the universal limit
$\gamma_{\mathrm E}/\ln2$ in \eqref{eq:universal-nG-limit}.  The
different approach rates are explained by the
$n^{-3/2}$ reliability-dependent term in \eqref{eq:gap-large-n}.
Consequently, stringent reliability can delay the onset of the
universal leading behavior even though it does not change the limit.
These curves concern fixed $\varepsilon$ as $n$ increases; they do not
represent the joint scaling $\varepsilon=\varepsilon_n$ in
\eqref{eq:joint-reliability-exponent}.

\begin{figure*}[t]
  \centering
  \includegraphics[width=\textwidth]
  {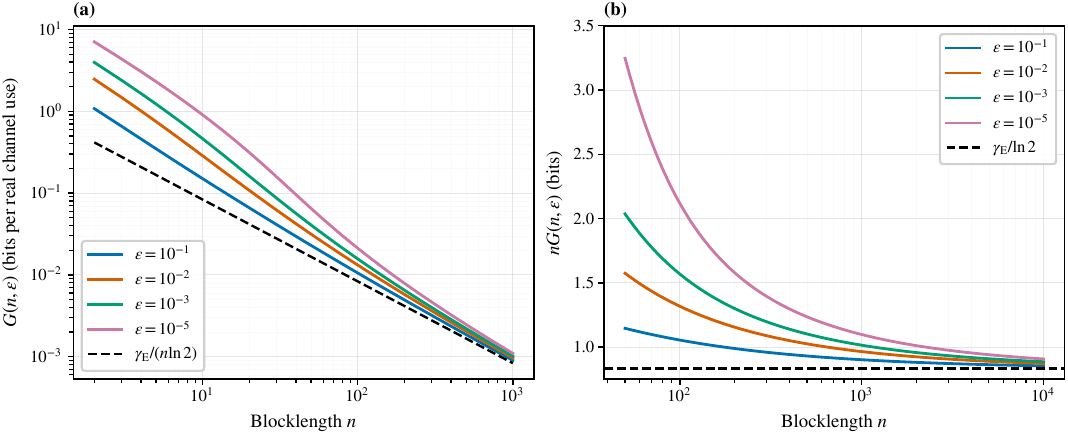}
  \caption{Limiting rate-bound gap as a function of blocklength for
  fixed target error probabilities
  $\varepsilon\in\{10^{-1},10^{-2},10^{-3},10^{-5}\}$.
  (a) $G(n,\varepsilon)$ together with the leading approximation
  $\gamma_{\mathrm E}/(n\ln2)$.  (b) The scaled gap
  $nG(n,\varepsilon)$ together with its universal limit
  $\gamma_{\mathrm E}/\ln2$, where $\gamma_{\mathrm E}$ is the
  Euler--Mascheroni constant.}
  \label{fig:numerical-limiting-gap}
\end{figure*}

\section{Conclusion}
\label{sec:conclusion}

This paper characterizes random spherical codebooks in the
fixed-blocklength, high-SNR regime in which the SNR per real channel use
tends to infinity and the codebook size grows with it.  The governing
mechanism is the balance between a shrinking noise-selected dangerous
cap and a growing number of competitor codewords.  This balance
identifies the critical codebook-growth order and separates the
underloaded, critically loaded, and overloaded error regimes.  On the
critical scale, the fixed-error data rate uses the sphere's $n-1$
tangential dimensions.  Its high-SNR prelog agrees with that of the
circular-cone converse for the deterministic equal-energy,
average-error class, establishing first-order optimality of random
spherical codes within this class.

First-order agreement does not imply additive agreement.  At fixed
blocklength and target error probability, the ensemble-achievable rate
remains separated from the converse rate bound by a positive limiting
gap.  At fixed target error probability, this gap vanishes as $1/n$ in
the sequential high-SNR and large-blocklength limit, with a universal
leading coefficient.  Under an exponentially decreasing target error
probability, the joint analysis identifies the reliability threshold
between vanishing and nonvanishing limiting gaps and quantifies the
blocklength required to meet a prescribed gap.  The fixed-$n$ rate and
gap expansions are not asserted to hold uniformly when $n$ and
$\gamma$ grow simultaneously.

\appendices

\section{Scaled Dangerous-Cap Limit}
\label{app:local-mass-limits}

This appendix proves Proposition~\ref{prop:local-mass-limit}.

Set $\delta=\gamma^{-1/2}$, rotate coordinates so that the transmitted
codeword is $\mathbf e_1$, and couple all SNR values through the same
noise vector.  Write
\[
  \mathbf Z=Z_\parallel\mathbf e_1+\mathbf Z_\perp,
  \qquad
  T_\perp=\frac{\|\mathbf Z_\perp\|^2}{2},
  \qquad
  L_\delta=1+\frac{\delta}{\sqrt n}Z_\parallel.
\]
Then $T_\perp\sim\operatorname{Gamma}(a,1)$ and is independent of
$Z_\parallel$.  By \eqref{eq:Hgamma-def}, the exact dangerous-cap
height is
\[
  h_\delta
  =
  1-\frac{L_\delta}
  {\sqrt{L_\delta^2+(2\delta^2/n)T_\perp}}.
\]
The beta representation gives
\begin{equation}
  \begin{aligned}
    I_{h/2}(a,a)
    &=
    \frac{h^a}{2^a a\,\mathrm B(a,a)}
    \bigl(1+O_n(h)\bigr)
    \\
    &=n^aD_nh^a\bigl(1+O_n(h)\bigr),
    \qquad h\downarrow0.
  \end{aligned}
  \label{eq:small-cap-prob-app}
\end{equation}
For the convergence rate, use the good event
\[
  \mathcal H_\delta
  =
  \{|Z_\parallel|\le\delta^{-1/2},\ 
  T_\perp\le\delta^{-1}\}.
\]
For all sufficiently small $\delta$, write on $\mathcal H_\delta$
\[
  S_\delta=\sqrt{L_\delta^2+(2\delta^2/n)T_\perp}.
\]
Then
\[
  h_\delta=\frac{\delta^2}{n}T_\perp g_\delta,
  \qquad
  g_\delta=\frac{2}{S_\delta(S_\delta+L_\delta)}.
\]
Here $L_\delta=1+O_n(\delta^{1/2})$ and
$\delta^2T_\perp/n=O_n(\delta)$.  Smoothness near $(L_\delta,
\delta^2T_\perp/n)=(1,0)$, together with
\eqref{eq:small-cap-prob-app}, therefore implies
\begin{align*}
  &\left|
  \delta^{-2a}\Pi_\gamma-D_nT_\perp^a
  \right|\mathbf 1_{\mathcal H_\delta}
  \\
  &\quad\le
  C_nT_\perp^a
  \bigl(\delta|Z_\parallel|+\delta^2T_\perp\bigr).
\end{align*}
For $q\in\{1,2\}$, the $L^q$ norm of the right-hand side is
$O_{n,q}(\delta)$.  On $\mathcal H_\delta^c$, use
$0\le\Pi_\gamma\le1$ together with Gaussian and gamma Chernoff bounds
to obtain, for every fixed $b>0$,
\[
  \left\|\delta^{-2a}\Pi_\gamma
  \mathbf 1_{\mathcal H_\delta^c}\right\|_q
  +
  \left\|D_nT_\perp^a
  \mathbf 1_{\mathcal H_\delta^c}\right\|_q
  =o(\delta^b).
\]
Combining the good- and bad-event estimates gives
\[
  \left\|
  \gamma^a\Pi_\gamma-D_nT_\perp^a
  \right\|_q
  =
  O_{n,q}(\gamma^{-1/2}).
\]
This proves \eqref{eq:local-mass-Lq}.  For $k\in\{1,2\}$,
\[
  \E[\Theta_n^k]
  =
  D_n^k\frac{\Gamma((k+1)a)}{\Gamma(a)}.
\]
Taking expectations for $k=1$ proves \eqref{eq:p1-leading}; the
$L^2$ convergence also gives
\[
  \gamma^{2a}\E[\Pi_\gamma^2]
  \longrightarrow\E[\Theta_n^2],
\]
which is used in the Bonferroni estimate in
Appendix~\ref{app:first-order-load-proof}.

\section{Proofs of the Error Transition and Fixed-Error Data Rate}
\label{app:first-order-load-proof}

This appendix proves Theorem~\ref{thm:load-scaled},
Corollary~\ref{cor:growing-codebook-regimes}, and
Theorem~\ref{thm:ensemble-fixed-error-rate}.  Put $m=M-1$ and
$\mu=\mu(M,\gamma)=m/\gamma^a$.

\subsection{Critical-scale approximation}

For $p\in[0,1]$ and every integer $m\ge1$, the zero-count case of
Le~Cam's inequality \cite{LeCam1960} gives
\[
  0\le e^{-mp}-(1-p)^m\le2mp^2.
\]
The case $m=0$ is immediate.  Moreover,
$|e^{-x}-e^{-y}|\le|x-y|$ for $x,y\ge0$.  Since
$m=\mu\gamma^a$, the exact error representation gives
\begin{align*}
  &\left|
  \overline\Pe(M,n;\gamma)-F_n^{\mathrm{ens}}(\mu)
  \right|
  \\
  &\quad\le
  2\mu\gamma^{-a}
  \E\!\left[(\gamma^a\Pi_\gamma)^2\right]
  \\
  &\qquad+
  \mu\left\|\gamma^a\Pi_\gamma-\Theta_n\right\|_1.
\end{align*}
By Proposition~\ref{prop:local-mass-limit}, the first term is
$O_{n,K}(\gamma^{-a})$ and the second is
$O_{n,K}(\gamma^{-1/2})$, uniformly for $0\le\mu\le K$.  Since
$a\ge1/2$, this proves
\eqref{eq:quantitative-error-curve}, and
\eqref{eq:load-error-limit} follows immediately.

\subsection{Underloaded and overloaded regimes}

Suppose first that $M(\gamma)\to\infty$ and $\mu_\gamma\to0$.
The first two Bonferroni inequalities applied to the competitor-codeword events
give
\[
  m\E[\Pi_\gamma]
  -\binom m2\E[\Pi_\gamma^2]
  \le\overline\Pe(M,n;\gamma)
  \le m\E[\Pi_\gamma].
\]
By \eqref{eq:p1-leading},
\[
  m\E[\Pi_\gamma]
  =A_n\mu_\gamma
  \bigl(1+O_n(\gamma^{-1/2})\bigr),
\]
whereas the $q=2$ part of
Proposition~\ref{prop:local-mass-limit} gives
\[
  \binom m2\E[\Pi_\gamma^2]
  =O_n(\mu_\gamma^2).
\]
Consequently,
\[
  \overline\Pe(M(\gamma),n;\gamma)
  =A_n\mu_\gamma
  \left[1+O_n(\gamma^{-1/2})+O_n(\mu_\gamma)\right],
\]
which proves \eqref{eq:underload-regime}.  The critical regime is the
finite positive case of Theorem~\ref{thm:load-scaled}.

Now suppose $\mu_\gamma\to\infty$.  Put
$V_\gamma=\gamma^a\Pi_\gamma$.  For any $\zeta>0$,
\begin{align*}
  \Prob\{N_\gamma=0\}
  &=\E[(1-\Pi_\gamma)^m]
  \\
  &\le
  \Prob\{V_\gamma\le\zeta\}
  +e^{-\zeta\mu_\gamma}.
\end{align*}
Since $V_\gamma\to\Theta_n$ in probability and $\Theta_n>0$ almost
surely,
\[
  \limsup_{\gamma\to\infty}\Prob\{N_\gamma=0\}
  \le\Prob\{\Theta_n\le\zeta\}.
\]
Letting $\zeta\downarrow0$ proves \eqref{eq:overload-regime} and
completes the proof of Corollary~\ref{cor:growing-codebook-regimes}.

Finally, differentiation under the expectation gives
\[
  \frac{\dd}{\dd\mu}F_n^{\mathrm{ens}}(\mu)
  =
  \E[\Theta_ne^{-\mu\Theta_n}]>0.
\]
Hence the curve is continuous and strictly increasing.  Its endpoint
values follow from dominated convergence and the fact that
$\Theta_n>0$ almost surely.

\subsection{Fixed-error inversion}

Let
$\mu_\varepsilon=\mu_{n,\varepsilon}^{\mathrm{ens}}$.
Continuity of
$F_n^{\mathrm{ens}\,\prime}(\mu)
=\E[\Theta_ne^{-\mu\Theta_n}]$ provides a neighborhood
$\mathcal U$ of $\mu_\varepsilon$ and a constant $c>0$ such that
$F_n^{\mathrm{ens}\,\prime}\ge c$ on $\mathcal U$.  Strict inequalities
at the endpoints of $\mathcal U$, together with
\eqref{eq:quantitative-error-curve}, place the threshold grid point
\[
  \widehat\mu_\gamma
  =\frac{M_{\varepsilon,\mathrm{ens}}(n,\gamma)-1}{\gamma^a}
\]
and its successor in $\mathcal U$ for all sufficiently large $\gamma$.  The
threshold definition and the uniform error bound imply
\begin{align*}
  F_n^{\mathrm{ens}}(\widehat\mu_\gamma)
  &\le\varepsilon+C_{n,\varepsilon}\gamma^{-1/2},\\
  F_n^{\mathrm{ens}}(\widehat\mu_\gamma+\gamma^{-a})
  &>\varepsilon-C_{n,\varepsilon}\gamma^{-1/2}.
\end{align*}
The mean-value theorem and
$\gamma^{-a}\le\gamma^{-1/2}$ therefore give
\[
  \widehat\mu_\gamma=\mu_\varepsilon
  +O_{n,\varepsilon}(\gamma^{-1/2}),
\]
which proves \eqref{eq:M-eps-leading} and
\eqref{eq:M-eps-leading-unscaled}.  To pass from the integer codebook
size to its rate, use the exact identity
\begin{align*}
  R_{\varepsilon,\mathrm{ens}}(n,\gamma)
  &=\frac an\log_2\gamma
  +\frac1n\log_2
  \left(\widehat\mu_\gamma+\gamma^{-a}\right).
\end{align*}
Since $\mu_\varepsilon>0$,
$\widehat\mu_\gamma=\mu_\varepsilon+O_{n,\varepsilon}
(\gamma^{-1/2})$, and $\gamma^{-a}\le\gamma^{-1/2}$, smoothness of
the logarithm near $\mu_\varepsilon$ proves
\eqref{eq:R-eps-ens-direct}.  Finally, if
\eqref{eq:rate-offset-scaling} holds, then
$(M(\gamma)-1)/\gamma^a\to2^\eta$, and
\eqref{eq:rate-offset-error-limit} follows from
Theorem~\ref{thm:load-scaled}.

\section{High-SNR Expansion of the Circular-Cone Rate Bound}
\label{app:deterministic-converse-expansions}

This appendix proves the high-SNR claims in
Theorem~\ref{thm:equal-energy-rate-comparison}.  The underlying
nonasymptotic converse is cited in the main text and is not rederived.

From \eqref{eq:normalized-cone-angle},
$\sin\phi=\phi(1+O(\phi^2))$ at the origin, and
$\int_0^\pi\sin^{n-2}\phi\,\dd\phi
=\sqrt\pi\,\Gamma((n-1)/2)/\Gamma(n/2)$, we obtain
\[
  \Omega_n(\theta)
  =\kappa_n\theta^{n-1}
  \left[1+O_n(\theta^2)\right].
\]
This supplies the error order needed below.

Uniformly whenever $M/\gamma^a$ ranges over a fixed compact set
$K\subset(0,\infty)$, inversion of the same expansion and
$\tan\theta=\theta(1+O(\theta^2))$ give
\begin{equation}
  \sqrt{n\gamma}\tan\theta_{n,M}
  =\sqrt n\left(\kappa_n\frac{M}{\gamma^a}\right)^{-1/(n-1)}
  \left[1+O_{n,K}(\gamma^{-1})\right].
  \label{eq:uniform-cone-angle-expansion-app}
\end{equation}

Suppose
$\mu(M_j,\gamma_j)\to\mu\in(0,\infty)$.  Then
$M_j/\gamma_j^a=\mu(M_j,\gamma_j)+\gamma_j^{-a}\to\mu$, and
\eqref{eq:uniform-cone-angle-expansion-app} gives
$\sqrt{n\gamma_j}\tan\theta_{n,M_j}\to
\sqrt n(\kappa_n\mu)^{-1/(n-1)}$.
Write
\[
  \mathbf Z=Z_\parallel\mathbf e_1+\mathbf Z_\perp,
  \qquad \mathbf Z_\perp\in\R^{n-1}.
\]
On $\{\sqrt{n\gamma_j}+Z_\parallel>0\}$, whose complement has
exponentially small probability,
\begin{align*}
  &\angle(\sqrt{n\gamma_j}\mathbf e_1+\mathbf Z,\mathbf e_1)
  >\theta_{n,M_j}
  \\
  &\hspace{15mm}\Longleftrightarrow\quad
  \|\mathbf Z_\perp\|
  >(\sqrt{n\gamma_j}+Z_\parallel)\tan\theta_{n,M_j}.
\end{align*}
The random threshold on the right therefore converges in probability to
$\sqrt n(\kappa_n\mu)^{-1/(n-1)}$.  Continuity of the
$\chi_{n-1}$ distribution
proves \eqref{eq:cone-error-curve}.

To compare the limiting curves, use
$D_n=(2/n)^a\kappa_n$ and define
\[
  \tau_\mu=\frac n2(\kappa_n\mu)^{-1/a}.
\]
Since $\mu D_n\tau_\mu^a=1$, both limiting curves can be expressed
through $T\sim\operatorname{Gamma}(a,1)$:
\begin{align*}
  F_n^{\mathrm{conv}}(\mu)
  &=\Prob\{T>\tau_\mu\},
  \\
  F_n^{\mathrm{ens}}(\mu)
  &=\E\left[1-
  \exp\left\{-\left(T/\tau_\mu\right)^a\right\}\right].
\end{align*}
Both curves are continuous and strictly increasing from zero to one.
Put $X=(T/\tau_\mu)^a$.  Its density is
\[
  f_X(x)=\frac{\tau_\mu^a}{a\Gamma(a)}
  e^{-\tau_\mu x^{1/a}},
  \qquad x>0,
\]
and is strictly decreasing.  Hence
\begin{align*}
  &F_n^{\mathrm{ens}}(\mu)-F_n^{\mathrm{conv}}(\mu)
  \\
  &\quad=\int_0^\infty
  \left(1-e^{-x}-\mathbf 1\{x>1\}\right)
  f_X(x)\,\dd x>0.
\end{align*}
Indeed, the unweighted positive and negative parts have the same mass,
\[
  \int_0^1(1-e^{-x})\,\dd x
  =\int_1^\infty e^{-x}\,\dd x=e^{-1},
\]
and strict decrease of $f_X$ gives
\[
  \int_0^1(1-e^{-x})f_X(x)\,\dd x
  >f_X(1)e^{-1}
  >\int_1^\infty e^{-x}f_X(x)\,\dd x.
\]
This proves \eqref{eq:limiting-error-curve-ordering}.

To obtain the quantitative threshold, let
\[
  \varphi_\gamma
  =\angle(\sqrt{n\gamma}\mathbf e_1+\mathbf Z,\mathbf e_1)
\]
and define its upper $\varepsilon$-quantile by
\[
  \Prob\{\varphi_\gamma>\vartheta_{n,\varepsilon}(\gamma)\}
  =\varepsilon.
\]
Put
$q_{n,\varepsilon}=\sqrt{s_{n-1,\varepsilon}}$.  Uniformly for $w$ in
a compact subset of $(0,\infty)$, conditioning on $Z_\parallel$ gives
\begin{align}
  &\Prob\left\{\varphi_\gamma>
  \arctan\left(\frac{w}{\sqrt{n\gamma}}\right)\right\}
  \nonumber\\
  &\quad=
  \E\left[
  Q_{n-1}\left(w^2(1+Z_\parallel/\sqrt{n\gamma})^2\right)
  \right]+O_n(e^{-n\gamma/2})
  \nonumber\\
  &\quad=Q_{n-1}(w^2)+O_n(\gamma^{-1}).
  \label{eq:quantitative-cone-angle-app}
\end{align}
The linear Taylor term vanishes because $\E[Z_\parallel]=0$; Gaussian
tail truncation justifies the uniform second-order expansion.  Since
\eqref{eq:uniform-cone-angle-expansion-app} keeps $w$ in a compact
subset of $(0,\infty)$, substituting
$w=\sqrt{n\gamma}\tan\theta_{n,M}$ in
\eqref{eq:quantitative-cone-angle-app} gives a uniform
$O_{n,K}(\gamma^{-1})$ approximation on every compact scaled-size
interval $K\subset(0,\infty)$.  Since the $\chi^2_{n-1}$ density is positive at
$s_{n-1,\varepsilon}$, quantile inversion yields
\begin{equation}
  \sqrt{n\gamma}\tan\vartheta_{n,\varepsilon}(\gamma)
  =q_{n,\varepsilon}+O_{n,\varepsilon}(\gamma^{-1}).
  \label{eq:cone-quantile-rate-app}
\end{equation}
Furthermore,
\[
  M_{\varepsilon,\mathrm{conv}}(n,\gamma)
  =\left\lfloor
  \Omega_n(\vartheta_{n,\varepsilon}(\gamma))^{-1}
  \right\rfloor.
\]
Combining \eqref{eq:cone-quantile-rate-app}, the small-angle expansion,
and the integer rounding error gives
\begin{align*}
  M_{\varepsilon,\mathrm{conv}}(n,\gamma)
  &=\mu_{n,\varepsilon}^{\mathrm{conv}}\gamma^a
  \left[1+O_{n,\varepsilon}(\gamma^{-1})\right]+O(1).
\end{align*}
Equivalently,
\[
  \frac{M_{\varepsilon,\mathrm{conv}}(n,\gamma)}{\gamma^a}
  =\mu_{n,\varepsilon}^{\mathrm{conv}}
  +O_{n,\varepsilon}
  \left(\gamma^{-\min\{1,a\}}\right).
\]
Taking logarithms, with the $O(1)$ rounding term included, gives
\eqref{eq:R-conv-expansion}.  Finally, strict
monotonicity of the two
limiting curves and \eqref{eq:limiting-error-curve-ordering} give
\eqref{eq:fixed-error-scale-ordering}.

\section{Rate-Bound Gap and Reliability--Blocklength Asymptotics}
\label{app:rate-gap-asymptotics}

This appendix proves Proposition~\ref{prop:high-snr-rate-gap} and
Theorem~\ref{thm:gap-reliability-blocklength}.

\subsection{Exact gap identity}

The constants in \eqref{eq:Dn-def} and \eqref{eq:kappa-n-def} satisfy
$D_n=(2/n)^a\kappa_n$.  Since
$s_{n-1,\varepsilon}=2t_{a,\varepsilon}$,
\[
  \mu_{n,\varepsilon}^{\mathrm{conv}}
  =
  \frac1{D_nt_{a,\varepsilon}^a}.
\]
The ensemble threshold equation likewise gives
\[
  \mu_{n,\varepsilon}^{\mathrm{ens}}
  =
  \frac{\mathcal L_a^{-1}(1-\varepsilon)}{D_n}.
\]
Thus the common factor $D_n^{-1}$ cancels exactly from
$r_{a,\varepsilon}$.
Using
$r_{a,\varepsilon}
=\mu_{n,\varepsilon}^{\mathrm{ens}}/
\mu_{n,\varepsilon}^{\mathrm{conv}}$ in
$F_n^{\mathrm{ens}}
(\mu_{n,\varepsilon}^{\mathrm{ens}})=\varepsilon$ gives
\[
  \mu_{n,\varepsilon}^{\mathrm{ens}}D_nT^a
  =
  r_{a,\varepsilon}
  (T/t_{a,\varepsilon})^a,
\]
which proves \eqref{eq:r-gap-root}.  As $r$ ranges from zero to
infinity, its left-hand side is continuous and strictly increasing from
zero to one, so the root is unique.
Equivalently, \eqref{eq:r-gap-root} can be written as
\[
  \mathcal L_a\!\left(
  \frac{r_{a,\varepsilon}}{t_{a,\varepsilon}^{a}}
  \right)
  =1-\varepsilon.
\]
Applying $\mathcal L_a^{-1}$ proves
\eqref{eq:r-gap-inverse-transform}.  Substitution into
\eqref{eq:gap-identity} proves \eqref{eq:gap-inverse-transform}.
The strict ordering \eqref{eq:fixed-error-scale-ordering} gives
$r_{a,\varepsilon}\in(0,1)$.  Equations
\eqref{eq:gap-identity} and \eqref{eq:finite-snr-gap-expansion} follow
from the two rate expansions.  Dividing the latter identity by
$R_{\varepsilon,\mathrm{conv}}(n,\gamma)$ and using
\eqref{eq:R-conv-expansion} proves
\eqref{eq:relative-shortfall-exact}.

\subsection{Fixed blocklength and increasing reliability}

Fix $a$ and write $t=t_{a,\varepsilon}$.  For
$X_t=(T/t)^a$,
\[
  \E[X_t^k]
  =
  \frac{\Gamma((k+1)a)}
       {\Gamma(a)t^{ak}},
  \qquad k\ge1.
\]
In particular,
\[
  \frac{\E[X_t^2]}{\E[X_t]^2}
  =\frac{\Gamma(3a)\Gamma(a)}{\Gamma(2a)^2}<\infty.
\]
The upper and lower quadratic Taylor bounds for $1-e^{-rX_t}$ and
inversion at the origin therefore give
\begin{equation}
  r_{a,\varepsilon}
  =
  \varepsilon
  \frac{\Gamma(a)t_{a,\varepsilon}^a}{\Gamma(2a)}
  \left[1+O_a(\varepsilon)\right].
  \label{eq:r-small-epsilon-app}
\end{equation}
The upper incomplete-gamma expansion gives
\[
  \varepsilon
  =
  \frac{t^{a-1}e^{-t}}{\Gamma(a)}
  \left[1+O_a(t^{-1})\right],
  \qquad t\to\infty.
\]
With $L=\ln(1/\varepsilon)$, inversion yields
\begin{equation}
  t_{a,\varepsilon}
  =
  L+(a-1)\ln L-\ln\Gamma(a)
  +O_a\left(\frac{\ln L}{L}\right).
  \label{eq:gamma-quantile-fixed-a-app}
\end{equation}
Substituting \eqref{eq:gamma-quantile-fixed-a-app} into
\eqref{eq:r-small-epsilon-app} and using
$G=-(n\ln2)^{-1}\ln r$ proves
\eqref{eq:gap-small-epsilon}.  Replacing $L$ by
$L+\ln10$ and subtracting proves
\eqref{eq:fixed-n-decade-cost}.

\subsection{Fixed reliability and increasing blocklength}

For fixed $0<\varepsilon<1$, a Berry--Esseen bound for
$T\sim\operatorname{Gamma}(a,1)$, followed by quantile inversion, gives
\[
  t_{a,\varepsilon}
  =
  a+z_\varepsilon\sqrt a+O_\varepsilon(1).
\]
Write $s=\ln r$ and make the exact change of variables
$y=a\ln(T/t_{a,\varepsilon})$ in \eqref{eq:r-gap-root}.  After
subtracting the gamma-tail probability, the root equation becomes
\[
  \int_{-\infty}^{\infty}d_s(y)w_a(y)\,\dd y=0,
\]
where
\begin{align*}
  d_s(y)
  &=1-e^{-e^{s+y}}-\mathbf 1\{y>0\},
  \\
  w_a(y)
  &=
  \frac{t_{a,\varepsilon}^a}{a\Gamma(a)}
  \exp\{y-t_{a,\varepsilon}e^{y/a}\}.
\end{align*}
For every fixed $C>0$, uniformly for $s$ in a fixed compact set and
$|y|\le C\ln a$,
\[
  \frac{w_a(y)}{w_a(0)}
  =
  1-\frac{z_\varepsilon y}{\sqrt a}
  +O_\varepsilon\left(\frac{1+y^2}{a}\right).
\]
The required kernel integrals are
\begin{align}
  \int_{-\infty}^{\infty}d_s(y)\,\dd y
  &=s+\gamma_{\mathrm E},
  \label{eq:boundary-I0-app}\\
  \int_{-\infty}^{\infty}y\,d_s(y)\,\dd y
  &=
  -\frac12(s+\gamma_{\mathrm E})^2-\frac{\pi^2}{12}.
  \label{eq:boundary-I1-app}
\end{align}
These identities follow by viewing $1-e^{-e^{s+y}}$ as the
distribution function of $\ln E-s$ for
$E\sim\operatorname{Exp}(1)$, whose mean and variance are
$-\gamma_{\mathrm E}-s$ and $\pi^2/6$.

The kernel $d_s(y)$ has integrable exponential tails, uniformly for $s$
in compact sets.  Dominated convergence first confines the root to a
fixed compact interval around $-\gamma_{\mathrm E}$; the central weight
expansion can then be integrated term by term, with the tails controlled
uniformly.  Using \eqref{eq:boundary-I0-app} and
\eqref{eq:boundary-I1-app} gives
\begin{align*}
  &\frac1{w_a(0)}
  \int_{-\infty}^{\infty}d_s(y)w_a(y)\,\dd y
  \\
  &\quad=
  s+\gamma_{\mathrm E}
  +\frac{z_\varepsilon}{\sqrt a}
  \left[
  \frac{(s+\gamma_{\mathrm E})^2}{2}
  +\frac{\pi^2}{12}
  \right]
  +O_\varepsilon(a^{-1}),
\end{align*}
uniformly on that interval.  Root monotonicity now yields
\[
  \ln r_{a,\varepsilon}
  =
  -\gamma_{\mathrm E}
  -\frac{\pi^2z_\varepsilon}{12\sqrt a}
  +O_\varepsilon(a^{-1}).
\]
Since $n=2a+1$ and
$G=-(n\ln2)^{-1}\ln r$, this proves
\eqref{eq:gap-large-n} and \eqref{eq:universal-nG-limit}.

\subsection{Exponentially stringent reliability}

Suppose \eqref{eq:joint-reliability-exponent} holds.  The gamma
large-deviation rate function is
\[
  \mathcal I(q)=q-1-\ln q,\qquad q>0.
\]
Consequently,
\[
  \frac{t_{a,\varepsilon_n}}a\longrightarrow q_\rho,
  \qquad \mathcal I(q_\rho)=2\rho.
\]

For $1<q_\rho<2$, make the change
$x=(T/t_{a,\varepsilon_n})^a$ in the root equation and normalize by the
transformed density $f_a$ at one.  Its normalized ratio satisfies
\[
  \frac{f_a(x)}{f_a(1)}
  =\exp\left\{-t_{a,\varepsilon_n}
  \bigl(x^{1/a}-1\bigr)\right\}
  \longrightarrow x^{-q_\rho}.
\]
Near zero, the root-equation kernel is $O(x)$, which is integrable
against $x^{-q_\rho}$ precisely when $q_\rho<2$; at infinity it decays
exponentially for $r$ in a compact neighborhood of the positive
limiting root.  Dominated convergence therefore yields
\begin{align*}
  0
  &=
  \int_0^\infty
  \left(
  1-e^{-rx}-\mathbf 1\{x>1\}
  \right)x^{-q_\rho}\,\dd x
  \\
  &=
  \frac{
  \Gamma(2-q_\rho)r^{q_\rho-1}-1
  }{q_\rho-1}.
\end{align*}
Monotonicity of the finite-$a$ and limiting roots therefore gives
\[
  r_{a,\varepsilon_n}
  \longrightarrow
  \Gamma(2-q_\rho)^{-1/(q_\rho-1)}.
\]
In particular, $G(n,\varepsilon_n)\to0$.

For $q_\rho\ge2$, use
\[
  (1-e^{-1})\min\{u,1\}
  \le1-e^{-u}\le\min\{u,1\},
  \qquad u\ge0.
\]
Define
\[
  J_a(r)
  =\E\left[\min\left\{1,r(T/t_{a,\varepsilon_n})^a\right\}\right].
\]
For a fixed trial $r=e^{-a\sigma}$ with $\sigma\ge0$, truncate $T/a$
to a compact interval, apply the gamma large-deviation principle there,
and then remove the truncation by exponential tightness.  This gives
\begin{align}
  -\frac1a\ln J_a(e^{-a\sigma})
  &\longrightarrow K_q(\sigma),
  \nonumber\\
  K_q(\sigma)
  &=
  \min\left\{
  1+\sigma+\ln q-2\ln2,\,
  \mathcal I(qe^\sigma)
  \right\},
  \label{eq:supercritical-Laplace-app}
\end{align}
where $q=q_\rho$.  With $(x)_+=\max\{x,0\}$, the variational form is
\[
  K_q(\sigma)
  =\inf_{u>0}
  \left\{\mathcal I(u)+\bigl(\sigma-\ln(u/q)\bigr)_+\right\}.
\]
Splitting the infimum at $u=qe^\sigma$ gives the two branches in
\eqref{eq:supercritical-Laplace-app}: the unsaturated branch is minimized
at $u=2$, whereas the saturated branch is minimized at its left
endpoint.  The unique solution of
$K_q(\sigma)=\mathcal I(q)$ is zero when $q=2$ and
\[
  \sigma=q-2-2\ln(q/2)
\]
when $q>2$.  At the exact root, the constant-factor bounds above imply
$-a^{-1}\ln J_a(r_{a,\varepsilon_n})\to \mathcal I(q)$.  When $q>2$, the
solution is positive.  Applying the Laplace limit at fixed trial values
on either side of this solution and using monotonicity shows that
$-a^{-1}\ln r_{a,\varepsilon_n}$ is squeezed between arbitrarily close
trial values on either side of the solution.

At the boundary $q=2$, the argument is one-sided because the solution
$\sigma=0$ lies at the edge of the domain $\sigma\ge0$.  The ordering
$r_{a,\varepsilon_n}<1$ gives
$-a^{-1}\ln r_{a,\varepsilon_n}\ge0$.  For every $\delta>0$,
$K_2(\delta)>\mathcal I(2)$, so the constant-factor bounds and the Laplace limit
show that the root-equation left-hand side at the trial value
$r=e^{-a\delta}$ is eventually smaller than $\varepsilon_n$.
Monotonicity then gives
$r_{a,\varepsilon_n}>e^{-a\delta}$ and hence
$\limsup[-a^{-1}\ln r_{a,\varepsilon_n}]\le\delta$.  Letting
$\delta\downarrow0$ proves the boundary case.  Thus
\[
  -\frac1a\ln r_{a,\varepsilon_n}
  \longrightarrow
  q_\rho-2-2\ln(q_\rho/2),
  \qquad q_\rho\ge2.
\]
Division by $n\ln2$ proves \eqref{eq:G-supercritical-reliability} for
$q_\rho>2$ and gives $G(n,\varepsilon_n)\to0$ at $q_\rho=2$.  This
boundary corresponds to
$\rho_{\mathrm c}=(1-\ln2)/2$, proving
\eqref{eq:critical-reliability-exponent}.

\section{Proof of the Prescribed-Gap Blocklength Result}
\label{app:blocklength-design-proofs}

This appendix proves
Proposition~\ref{prop:prescribed-gap-blocklength}.

For fixed $\varepsilon$, substitute
\[
  n=N_0(G_0)
  +\frac{c_0z_\varepsilon}{\gamma_{\mathrm E}}
  \sqrt{N_0(G_0)}+x
\]
into \eqref{eq:gap-large-n}.  Taking $x$ to be fixed and positive or
fixed and negative gives a two-sided localization.  More explicitly, the
$O_\varepsilon(n^{-2})$ remainder is uniform for $n$ in a fixed
relative neighborhood of $N_0(G_0)$.  Hence a sufficiently large
constant $K_\varepsilon$ satisfies the following, where
$b_\varepsilon(G_0)$ denotes the displayed center:
\begin{align*}
  n\le b_\varepsilon(G_0)-K_\varepsilon
  &\quad\Longrightarrow\quad G(n,\varepsilon)>G_0,\\
  n\ge b_\varepsilon(G_0)+K_\varepsilon
  &\quad\Longrightarrow\quad G(n,\varepsilon)<G_0,
\end{align*}
for all sufficiently small $G_0$, within that neighborhood.  The
leading $1/n$ term excludes smaller growing blocklengths, while every
fixed finite collection of blocklengths has a positive minimum gap and
is also excluded as $G_0\downarrow0$.  Thus the smallest qualifying
integer lies within $O_\varepsilon(1)$ of the displayed center.  This
proves
\eqref{eq:nG-small-gap-law}; subtracting the expansions at
$\varepsilon$ and $\varepsilon/10$ gives
\eqref{eq:nG-decade-law}.

For the ultra-reliable inversion, define
\[
  \mathfrak G(\rho)
  =
  \begin{cases}
  0,
  &0\le\rho\le\rho_{\mathrm c},\\[2pt]
  \dfrac{
  q_\rho-2-2\ln(q_\rho/2)
  }{2\ln2},
  &\rho>\rho_{\mathrm c}.
  \end{cases}
\]
This function is continuous and strictly increasing above
$\rho_{\mathrm c}$, and
$\mathfrak G(\rho_{G_0})=G_0$.  Put
$L=\ln(1/\varepsilon)$ and fix $0<\delta<\rho_{G_0}$.  The trial
blocklength
\[
  n_+(\varepsilon)
  =\left\lceil\frac{L}{\rho_{G_0}-\delta}\right\rceil
\]
satisfies $L/n_+(\varepsilon)\to\rho_{G_0}-\delta$ and is therefore
feasible for all sufficiently small $\varepsilon$.  Thus
$n_G(\varepsilon,G_0)\le n_+(\varepsilon)$ eventually.

For the matching lower bound, every integer
\[
  n\le
  \left\lfloor\frac{L}{\rho_{G_0}+\delta}\right\rfloor
\]
is eventually infeasible.  Otherwise, there would be a sequence
$\varepsilon_k\downarrow0$ and feasible $n_k$ in this range.  After
passing to a subsequence, one of three cases must hold: $n_k$ is
bounded; $n_k\to\infty$ and $L_k/n_k$ has a finite limit at least
$\rho_{G_0}+\delta$; or $L_k/n_k\to\infty$, where
$L_k=\ln(1/\varepsilon_k)$.  The first case is excluded by
\eqref{eq:gap-small-epsilon}; the second is excluded by
Theorem~\ref{thm:gap-reliability-blocklength} and the strict increase
of $\mathfrak G$ above $\rho_{\mathrm c}$.  It remains to exclude the
third case.

For a sequence in the third case, define locally in this proof
\[
  U_n=\frac{\Gamma(a)}{\Gamma(2a)}T^a,
  \qquad
  \E[U_n]=1,
\]
and let $\lambda_{n,\varepsilon}$ solve
$\E[1-e^{-\lambda_{n,\varepsilon}U_n}]=\varepsilon$.  Then
\[
  r_{a,\varepsilon}
  =
  \lambda_{n,\varepsilon}
  \frac{\Gamma(a)t_{a,\varepsilon}^a}{\Gamma(2a)}.
\]
Moreover,
\[
  \E[U_n^2]
  =
  \frac{\Gamma(3a)\Gamma(a)}{\Gamma(2a)^2}
  =
  \exp\{a\ln(27/16)+O(\ln a)\}.
\]
If $L_n=\ln(1/\varepsilon_n)$ and $L_n/n\to\infty$, the Taylor bounds
for $1-e^{-x}$ can be applied quantitatively.  Indeed,
$\varepsilon_n\E[U_n^2]\to0$ because $L_n/a\to\infty$; evaluating the
lower Taylor bound at $2\varepsilon_n$ and using $\E[U_n]=1$ gives
\[
  \varepsilon_n
  \le\lambda_{n,\varepsilon_n}
  \le2\varepsilon_n
\]
along every unbounded-blocklength subsequence.  A gamma-tail Chernoff
bound also gives
\[
  \ln\frac{t_{a,\varepsilon_n}}a
  =O\left(
  \ln\left(1+\frac{L_n}{a}\right)
  \right).
\]
Together with Stirling's formula and $L_n/a\to\infty$, this yields
\[
  \ln\left[
  \frac{\Gamma(a)t_{a,\varepsilon_n}^a}{\Gamma(2a)}
  \right]
  =o(L_n).
\]
Consequently, $-\ln r_{a,\varepsilon_n}\sim L_n$ and
$G(n,\varepsilon_n)\to\infty$, which is the required contradiction.
We have therefore bounded $n_G(\varepsilon,G_0)$ between
$L/(\rho_{G_0}+\delta)+O(1)$ and
$L/(\rho_{G_0}-\delta)+O(1)$.  Divide by $L$ and let
$\delta\downarrow0$ to prove
\eqref{eq:nG-ultrareliable-law}.

\bibliographystyle{IEEEtran}
\bibliography{bib-high-snr-spherical-code-paper-final}

\end{document}